\documentclass{article}

\usepackage[utf8]{inputenc}
\usepackage[12pt]{extsizes}
\usepackage{graphicx}
\usepackage{hyperref}
\usepackage{amsmath}
\usepackage{amssymb}
\usepackage{tabularx, booktabs}
\usepackage{caption}
\usepackage[margin=1in]{geometry}
\usepackage[
    style=apa,
    backend=biber,
    uniquelist=false,
]{biblatex}
\usepackage{placeins}

\newcommand{\expect}[1]{\mathbb{E}\left[#1\right]}
\newcommand{\var}[1]{\mathbb{V}\left[#1\right]}
\newcommand{\vect}[1]{\text{vec}\left(#1\right)}

\newcommand{\norm}[1]{\left\lVert #1 \right\rVert}

\usepackage{todonotes}
\usepackage{tikz}
\usetikzlibrary{matrix}
\usetikzlibrary{shapes.geometric}
\usepackage{varwidth}
\usepackage[automake]{glossaries}
\usepackage{listings}
\usepackage{xcolor}
\usepackage{datetime}
\usepackage{multirow}
\usepackage{amsthm}
\usepackage{authblk}
\newdateformat{monthyear}{\monthname[\THEMONTH] \THEYEAR}

\newtheorem{proposition}{Proposition}
\newtheorem{corollary}{Corollary}[section]

\makeglossaries

\newacronym{ols}{OLS}{ordinary least squares}
\newacronym{wls}{WLS}{weighted least squares}
\newacronym{mint}{MinT}{minimum trace}
\newacronym{rls}{RLS}{recursive least squares}
\newacronym{rmse}{RMSE}{root mean squared error}
\newacronym{rrmse}{RRMSE}{relative reduction in root mean squared error}
\newacronym{glm}{LM}{linear model}
\newacronym{gls}{GLS}{generalized least squares}
\newacronym{map}{MAP}{maximum a posteriori}
\newacronym{mse}{MSE}{mean squared error}
\newacronym{spd}{SPD}{symmetric positive definite}
\newacronym{mle}{MLE}{maximum likelihood estimate}
\newacronym{dh}{DH}{district heating}

\usetikzlibrary{arrows, positioning}

\title{Online forecast reconciliation using linear models}
\author[1]{Tobias Rønlev-Knudsen\thanks{Corresponding author: tobro@dtu.dk}}
\author[1]{Henrik Madsen}
\author[1]{Jan Kloppenborg Møller}  
\affil[1]{Department of Applied Mathematics and Computer Science, Technical University of Denmark}
\date{\monthyear\today}

\begin{document}
\maketitle

\begin{abstract}
    We present a framework for online and adaptive forecasting and hierarchical reconciliation using linear regression models. We begin by formalizing hierarchies using graphs, and motivated by their structure, formulate a multivariate \glsentrylong{glm} using the matrix normal distribution to characterize residuals. Parameter estimation is posed as a ridge regression problem and applied to hierarchical forecast reconciliation. The connections between ridge regression, Bayesian estimation and shrinkage for hierarchical reconciliation are discussed, and results for uncertainty quantification in parameters and forecasts are provided. Based on the ridge regression formulation, a recursive inference scheme inspired by \glsentrylong{rls} is described. The algorithm is implemented in the \textit{PyOnlineForecast} package. Finally, the proposed methodology is demonstrated on a case study for \glsentrylong{dh} load forecasting using a temporal hierarchy. Our results provide a reference for implementation of forecast reconciliation via multivariate linear models in an online setting. The case study furthermore highlights practical considerations of using temporal hierarchies in an online setting and demonstrates the usefulness of the proposed framework and implementation, both for \glsentrylong{dh} load forecasting and more generally for online hierarchical forecasting.
\end{abstract}
\section{Introduction}\label{sec:introduction}
% Forecast reconciliation
Forecast reconciliation is a method to ensure coherency of forecasts of time series with hierarchical structures. Such structures enforce linear constraints on the forecasts, which are exploited by reconciliation methods to improve accuracy. 
% Use cases
Forecast reconciliation has successfully been applied to a range of domains. \cite{athanasopoulosHierarchicalForecastsAustralian2009} considered domestic tourism in Australia, grouping high dimensional time series data by geographical regions and purpose of travel, and demonstrated the use of several methods for achieving coherent forecasts. Applications within energy systems have also been considered. Examples include photovoltaic \parencite{yagliReconcilingSolarForecasts2019}, wind power \parencite{baiDistributedReconciliationDayAhead2019}, electricity loads \parencite{nystrupTemporalHierarchiesAutocorrelation2020,nystrupDimensionalityReductionForecasting2021} and \gls{dh} load forecasting \parencite{bergsteinssonHeatLoadForecasting2021,bergsteinssonHeatLoadForecasting2023}.
% History/related work
Early works on hierarchical time series used bottom-up or top-down approaches to aggregate or disaggregate forecasts and ensure coherency \parencite{grossDisaggregationMethodsExpedite1990,athanasopoulosHierarchicalForecastsAustralian2009}. As the basis for many later works, \cite{hyndmanOptimalCombinationForecasts2011} introduced the optimal combination method, which expresses coherent forecasts as a linear combination of base forecasts by solving an \gls{ols} problem. For large hierarchies \parencite{hyndmanFastComputationReconciled2016} considered a \gls{wls} approach utilizing sparse matrix solutions to improve efficiency. In general, these methods consider a \gls{gls} problem that uses the covariance matrix of the so-called coherency errors. This covariance matrix is usually unknown, and it was shown by \cite{wickramasuriyaOptimalForecastReconciliation2019} to be unidentifiable. In place of the true covariance matrix, they provided theoretical justification for using the full variance-covariance matrix of the base forecast errors using the \gls{mint} method. They furthermore provided several suggestions for estimators of the covariance matrix, including shrinkage based estimators \parencite{schaferShrinkageApproachLargeScale2005} to address collinearity, which is common in hierarchical time series. \cite{pritulargaStochasticCoherencyForecast2021} proposed that coherence should be considered stochastic, and considered the sources and effect of uncertainty in forecast reconciliation.

For temporal hierarchies, \cite{mollerLikelihoodbasedInferenceTemporal2024} studied the importance of correct specification of the covariance matrix, and suggested dimensionality reduction by parameterization of the covariance matrix to improve estimation. Further work by \cite{mollerOptimalForecastReconciliation2024} also investigated uncertainty quantification in hierarchies and showed that forecast reconciliation can be stated as a \gls{glm}. This result allows the use of methods for \gls{glm}s, and enabled explicit quantification of the uncertainties in the weight matrices used for reconciliation. The study demonstrated that forecast reconciliation acts as a linear projection, mapping base forecasts to a coherent linear subspace. This was also noted by \cite{panagiotelisForecastReconciliationGeometric2021}, who provided a geometric interpretation of forecast reconciliation.

% Online and adaptive methods
In practical settings, forecasts are often required to be updated regularly as new observations become available. Online forecasting methods address this by providing efficient algorithms for updating model parameters, typically using recursive estimation techniques. Adaptive models further ensure that model parameters are updated to reflect, typically slow, changes in the modelled system. 

% District heating load forecasting
Forecasting of \gls{dh} loads using adaptive methods was considered by \cite{bacherShorttermHeatLoad2013}. They proposed linear time series models using climate inputs and Fourier series of calendar variables as input features for predicting heat loads. Adaptive methods have also been considered for hierarchical models for heat load forecasting \parencite{bergsteinssonHeatLoadForecasting2021,bergsteinssonHeatLoadForecasting2023}. These works have considered both temporal and spatial hierarchies and demonstrate the use of \gls{mint} reconciliation using shrinkage based on \cite{ledoitImprovedEstimationCovariance2003} in an online and adaptive setting.

% Structure of the paper
The main contribution of this paper is to apply the results of \cite{mollerOptimalForecastReconciliation2024} to derive online and adaptive forecasting methods for hierarchical time series. Moreover, the paper extends current descriptions of forecast reconciliation by using graphs to define hierarchies and matrix normal distributions to characterize errors.
The paper is accompanied by an open source implementation in the Python package \textit{PyOnlineForecast} \parencite{Ronlev-Knudsen_PyOnlineForecast_2026}. The package is loosely based on the \textit{onlineforecast} package \parencite{bacherOnlineforecastPackageAdaptive2023} for the R programming language, which provides methods for online forecasting using ensembles of linear models and the \gls{rls} algorithm for adaptive parameter estimation. As a demonstration of the methods, a \gls{dh} load forecasting problem is considered. 

The structure of the paper is as follows: In Section \ref{sec:hierarchical_reconciliation}, hierarchies and forecast reconciliation are formally introduced. Sections \ref{sec:GLM_regularization} and \ref{sec:recursive_estimation} then describe a regularized multivariate linear model and results on recursive inference for this model. Section \ref{sec:GLM_hierarchies} demonstrates how linear forecast reconciliation may be interpreted as a special case of the model. In Section \ref{sec:implementation} a brief discussion of the implementation in the \textit{PyOnlineForecast} package is given. Section \ref{sec:adaptive_heat_load_forecasting} provides an example demonstrating the use of the methods and package on a \gls{dh} load forecasting problem. Section \ref{sec:conclusion} finally concludes the paper with a summary and discussion of results.
\section{Hierarchical reconciliation}\label{sec:hierarchical_reconciliation}
% Briefly, what is hierarchical reconciliation?
Hierarchical forecast reconciliation uses the notion of a hierarchy to define a linear summation constraint on the states of a system. The summation constraints are commonly illustrated by drawing the hierarchy as in Figure \ref{fig:hierarchy_example}. 
\begin{figure}
    \centering
    \begin{minipage}{0.45\textwidth}
        \centering
        \begin{tikzpicture}[->, >=stealth, node distance=2cm, every node/.style={draw, circle}]
            \node (u1) {$\mu_1$};
            \node (u2) [below left=of u1] {$\mu_2$};
            \node (u3) [below left=of u2] {$\mu_3$};
            \node (u4) [below right=of u2] {$\mu_4$};
            \node (u5) [below right=of u1] {$\mu_5$};

            % Main directed edges
            \draw (u1) -- (u2);
            \draw (u1) -- (u5);
            \draw (u2) -- (u3);
            \draw (u2) -- (u4);

            % Curvy dashed edges for transitive closure
            \draw[dashed, bend right=20] (u1) to (u3);
            \draw[dashed, bend left=20] (u1) to (u4);
        \end{tikzpicture}
    \end{minipage}
    \hfill
    \begin{minipage}{0.45\textwidth}
        \centering
        % Replace the matrix below with your desired expression.
        \[
        \mathcal{S} = \begin{bmatrix}
            1 & 1 & 1 \\
            1 & 1 & 0 \\
            1 & 0 & 0 \\
            0 & 1 & 0 \\
            0 & 0 & 1
        \end{bmatrix}
        \]
    \end{minipage}
    \caption{A hierarchy and its summation matrix. Nodes represent states with values constrained to the sum of their direct children, represented by solid edges. Node indices are chosen according to levels in the hierarchy. The transitive closure of the hierarchy, used to define the summation matrix, is represented by dashed edges.}
    \label{fig:hierarchy_example}
\end{figure}
In the following, we consider a hierarchy as a graph, where nodes represent states of a system and edges define summation constraints between the states. Formally, let $\mathcal{H} = \left(\mathcal{Y}, \mathcal{E}\right)$ be a directed tree where $\mathcal{Y}$ is a finite set of nodes, and $\mathcal{E} \subseteq \mathcal{Y}^2$ is a set of edges directed away from the root node. For nodes $\mu, \eta \in \mathcal{Y}$ we write $\mu \succ \eta$ if $\eta$ is reachable from $\mu$ (e.g. $\mu_1 \succ \mu_3$ in Figure \ref{fig:hierarchy_example}). More formally, let $\mathcal{E}^+$ be the transitive closure of $\mathcal{E}$, then we define the partial ordering $\succ$ as,
\begin{equation*}
    \mu \succ \eta \iff (\mu, \eta) \in \mathcal{E}^+.
\end{equation*}
We say that the interior nodes are the "top level" nodes, and the leaves are the "bottom level" nodes. Let $n_\text{top}$ and $n_\text{bot}$ be the number of top and bottom level nodes, respectively and set $n_\text{tot} = n_\text{top} + n_\text{bot}$ to be the total number of nodes in the hierarchy. Let $y_\text{top} \in \mathbb{R}^{n_\text{top}}$ and $y_\text{bot} \in \mathbb{R}^{n_\text{bot}}$ be the states of the top and bottom level nodes, respectively. We then define the summation matrix $\mathcal{S}$ in terms of its blocks,
\begin{equation*}
    \mathcal{S} = \begin{bmatrix} \mathcal{S}_{\text{top}} \\ I_{n_\text{bot}} \end{bmatrix} \in \mathbb{R}^{n_\text{tot}\times n_\text{bot}},
\end{equation*}
where $I_{n_\text{bot}}$ is the $n_\text{bot} \times n_\text{bot}$ identity matrix. Define the index set $\mathcal{I} = \{1, ..., n_\text{tot}\}$ and assign to each node an index $i \in \mathcal{I}$ such that the first $n_\text{top}$ nodes are top level and the subsequent $n_\text{bot}$ nodes are bottom level. The top level summation matrix $\mathcal{S}_{\text{top}}$ is defined in terms of reachability of the top levels from the bottom levels, i.e.,
\begin{equation*}
    \mathcal{S}_{top,i,j} = \begin{cases}
        1, & \text{if } \text{node } j+n_\text{top} \succ \text{node } i, \\
        0, & \text{otherwise.}
    \end{cases}
\end{equation*}
The constraints of the hierarchy may then be expressed in terms of the linear equation,
\begin{equation}
    \mathcal{S} y_\text{bot} = \begin{pmatrix}
        y_\text{top} \\
        y_\text{bot}
    \end{pmatrix}
    \label{eq:hierarchy_constraint}
\end{equation}
Note, that the ordering of the nodes within the top and bottom levels is so far arbitrary. To provide a convention, we further structure the top levels by their individual level. For a given node $\mu$, define the level $\ell(\mu)$, as the length of the longest directed path to a bottom level node (e.g. $\ell(\mu_1) = 2$ in Figure \ref{fig:hierarchy_example}). We then choose our ordering such that,
\begin{equation*}
    \ell(\mu_i) \geq \ell(\mu_j) \implies i < j.
\end{equation*}
Whilst this ordering is not unique, it satisfies that top and bottom levels are indexed by the first $n_\text{top}$ and last $n_\text{bot}$ indices of $\mathcal{I}$, respectively. Further, if $\mu \succ \eta$ it follows that $\ell(\mu) > \ell(\eta)$, so that parent nodes are ordered above their children.

In the context of forecasting and hierarchical reconciliation, we will consider time dependent states, and assume that observations are available at the bottom level of the hierarchy. The goal of hierarchical reconciliation \parencite{hyndmanOptimalCombinationForecasts2011} may be framed as finding a matrix $\mathcal{P} \in \mathbb{R}^{n_{\text{bot}} \times n_{\text{tot}}}$ and reconciled forecasts $\tilde{y}_t \in \mathbb{R}^{n_{\text{tot}}}$ satisfying,
\begin{equation}
    \label{eq:reconciliation}
    \tilde{y}_t = \mathcal{S} \mathcal{P} \hat{y}_t.
\end{equation}
The matrix $\mathcal{S}\mathcal{P}$ is a projection matrix, mapping base forecasts to a coherent linear subspace \autocite{panagiotelisForecastReconciliationGeometric2021}. In particular, $\mathcal{P}$ maps base forecasts to reconciled bottom level forecasts, which are then aggregated to top level forecasts by the summation matrix $\mathcal{S}$, enforcing coherency by construction. Thus, equation \eqref{eq:reconciliation} expresses the summation constraint \eqref{eq:hierarchy_constraint} in terms of the coherent forecasts $\mathcal{P} \hat{y}_t$ and $\tilde{y}_t$. Various methods have been suggested for determining $\mathcal{P}$ and the reconciled forecasts.
% Bottom up/top down
Simple reconciliation methods include the bottom-up and top-down approaches. In bottom-up reconciliation, the bottom level forecasts are aggregated by summation to form the top level forecasts, whereas the top-down approach distributes top-level forecasts to the bottom levels. These methods ensure coherency, but ignore information available through either top or bottom level forecasts or correlations between these.
% Optimal reconciliation
The optimal combination approach proposed by \cite{hyndmanOptimalCombinationForecasts2011} uses some of this information by posing the problem as linear regression,
\begin{equation*}
        \tilde{y}_t = \mathcal{S} \mathcal{P} \hat{y}_t + \epsilon_t,
        \label{eq:reconiliation_reg_model}
\end{equation*}
where \gls{gls} estimation was suggested to estimate $\mathcal{P}$ as,
\begin{equation*}
    \mathcal{P} = \left( \mathcal{S}^\top \Sigma^\dagger \mathcal{S} \right)^{-1} \mathcal{S}^\top \Sigma^\dagger,
\end{equation*}
where $\Sigma^\dagger$ is the Moore-Penrose generalized inverse of the variance-covariance matrix $\Sigma$ of the reconciliation error $\epsilon_t$. \cite{hyndmanOptimalCombinationForecasts2011} originally used the identity matrix in place of $\Sigma$, thus reducing the problem to \gls{ols}.
Later, \cite{wickramasuriyaOptimalForecastReconciliation2019} showed that $\Sigma$ is unidentifiable and provided theoretical justification for the use of the full variance-covariance matrix of the base forecast errors $W$ as a weight matrix in place of $\Sigma$. The resulting \gls{mint} solution minimizes the trace of the variance of the reconciled forecast errors and reconciliation. Several estimators have been suggested for $W$ \parencite{wickramasuriyaOptimalForecastReconciliation2019,nystrupTemporalHierarchiesAutocorrelation2020}, including shrinkage based estimators \parencite{schaferShrinkageApproachLargeScale2005,ledoitImprovedEstimationCovariance2003}. For temporal hierarchies, autocorrelation structures in the errors may also be used to guide the choice of estimator \parencite{nystrupTemporalHierarchiesAutocorrelation2020}.
The linear model \eqref{eq:reconiliation_reg_model} is stated as a regression problem for the reconciled forecasts. Alternatively, \cite{mollerOptimalForecastReconciliation2024} showed that the problem can be expressed as a \gls{glm} in the bottom level base forecast errors,
\begin{equation}
    \label{eq:hierarchy_glm}
    y_{bot,t} - \hat{y}_{bot,t} = X_t \theta + \epsilon_t, \quad \epsilon_t \sim N(0, \Sigma_\text{r}).
\end{equation}
Where the design matrix is given by,
\begin{equation}
    X_t = \left[ I_{n_\text{bot}} \otimes \left( \hat{y}_{top, t}^\top - \hat{y}_{bot,t}^\top \mathcal{S}_\text{top}^\top \right) \right],
\end{equation}
Here $y_{bot,t}, \hat{y}_{bot,t} \in \mathbb{R}^{n_\text{bot}}$ are the bottom level observations and base forecasts respectively and $\hat{y}_{top,t} \in \mathbb{R}^{n_\text{top}}$ are the top level base forecasts. This form has the advantage, that it enables the use of well known results on linear models. The reconciled forecasts are then given by,
\begin{equation}
    \hat{y}_{t} = \mathcal{S} \left(X_t \theta + \hat{y}_{bot,t}\right).
\end{equation}
\cite{mollerOptimalForecastReconciliation2024} further showed that Tikhonov regularization of the least squares estimate corresponds to shrinkage and \gls{map} estimation. For observations labelled $s=1,...,t$, the response vector and design matrix may be stacked into $y_\text{bot} - \hat{y}_\text{bot} \in \mathbb{R}^{t n_\text{bot}}$ and $X^\text{stack} \in \mathbb{R}^{t n_\text{bot} \times n_\text{top} n_\text{bot}}$. The Tikhonov regularized objective is then given by,
\begin{equation}
    \min_\theta \norm{ X^\text{stack} \theta \ - (y_\text{bot} - \hat{y}_\text{bot}) }_{I_t \otimes \Sigma_\text{r}^{-1}}^2 + \norm{ \theta - \theta_0 }_{\Sigma_{\theta_0}}^2
    \label{eq:tikhonov_hierarchy}
\end{equation}
where $\theta_0$ and $\Sigma_{\theta_0}$ are the prior mean and covariance. Let $\Sigma_\text{top}$ and $\Sigma_\text{bot}$ be the variance-covariance matrices of the top and bottom level base forecast errors, and $\Sigma_\text{top}^d$ and $\Sigma_\text{bot}^d$ be matrices containing their respective diagonals. For the \gls{map} estimate to match shrinkage, \cite{mollerOptimalForecastReconciliation2024} showed that the prior should be chosen as,
\begin{equation}
    \theta_0 = \left(\hat{\Sigma}_\text{top}^d + \mathcal{S}_\text{top} \hat{\Sigma}_\text{bot}^d \mathcal{S}_\text{top}^\top\right)^{-1} \left(\mathcal{S}_\text{top} \hat{\Sigma}_\text{bot}^d\right),
    \label{eq:shrinkage_prior}
\end{equation}
\begin{equation}
    \Sigma_{\theta_0} = \frac{1 - \gamma}{\gamma t} \left( \hat{\Sigma}_\text{top}^d + \mathcal{S}_\text{top} \hat{\Sigma}_\text{bot}^d \mathcal{S}_\text{top}^\top\right)^{-1}.
    \label{eq:shrinkage_prior_cov}
\end{equation}
where $\gamma \in [0,1]$ is the shrinkage parameter.
In this case, the shrinkage estimate of the parameters is given by,
\begin{equation}
    \begin{aligned}
    \hat{\theta} = \left((1-\gamma) \left(\hat{Y}_\text{top} - \hat{Y}_\text{bot} \mathcal{S}_\text{top}^\top\right)^\top \left(\hat{Y}_\text{top} - \hat{Y}_\text{bot} \mathcal{S}_\text{top}^\top\right) + \gamma t \left(\hat{\Sigma}_\text{top}^d + \mathcal{S}_\text{top} \hat{\Sigma}_\text{bot}^d \mathcal{S}_\text{top}^\top\right)\right)^{-1} \\
    \times \left( (1-\gamma) \left(\hat{Y}_\text{top} - \hat{Y}_\text{bot} \mathcal{S}_\text{top}^\top\right) ^\top \left(Y_\text{bot} - \hat{Y}_\text{bot}\right) + \gamma t \mathcal{S}_\text{top} \hat{\Sigma}_\text{bot}^d\right)    
    \end{aligned}
    \label{eq:shrinkage_estimate}
\end{equation} 
where $\hat{Y}_\text{top}, \hat{Y}_\text{bot}, Y_\text{bot}$ are matrices
containing the vectors $\hat{y}_{top,s}, \hat{y}_{bot,s}, y_{bot,s}$ for $s=1,...,t$ as rows. The Tikhonov regularized objective \eqref{eq:tikhonov_hierarchy} and related results provide much of the basis for the online reconciliation methods to be developed in the following sections.
\section{Linear models}\label{sec:GLM}
To formalize a framework for online forecasting and reconciliation, we will use \gls{glm}s as our basic modelling tool. The usual linear model expresses a linear relationship between a target variable and features, assuming i.i.d. Gaussian noise. Instead, we consider a more general form, that allows residuals to be correlated and utilizes matrix normal distributions to characterize distributions in both residuals and parameters. This facilitates a concise and unified representation of the inherently multivariate models and the connections between shrinkage estimators, ridge regression and Bayesian inference. We will use a notation for matrix variate normal distributions as in \cite{glanzExpectationMaximizationAlgorithmMatrix2013}, and refer to the textbook by \cite{gupta2018matrix} for further details, see also Appendix \ref{sec:appendix_useful_identities} for a brief description. Let $x$, $y$ be random variables taking values in $\mathbb{R}^n$ and $\mathbb{R}^m$, respectively. We will consider discrete, regularly sampled time series and use integer indices to denote time. Take a sequence of samples indexed by $s = 1, 2, \ldots, t$ and collect them in design and response matrices $X$ and $Y$, such that the s-th row of $X$ and $Y$ are the row vectors $x_s^\top$ and $y_s^\top$ respectively. We then consider a \gls{glm} of the form,
\begin{equation}
    \label{eq:glm}
    Y = X \theta + E, \quad E \sim \mathcal{MN}(0, U, V)
\end{equation}
where $E$ takes values in $\mathbb{R}^{t \times m}$ and the fixed matrices $U \in \mathbb{R}^{t \times t}$ and $V \in \mathbb{R}^{m \times m}$. This will be the basic model considered in the following. If $U$ has a diagonal structure, the model may equivalently be expressed as a set of $t$ linear models with independent residuals,
\begin{equation}
    y_s = \theta ^\top x_s + e_s, \quad e_s \sim N(0, U_{ss} V), \quad s = 1, 2, \ldots, t.
    \label{eq:independent_glm}
\end{equation}
The latter decomposition will be useful for our development of a recursive inference scheme in section \ref{sec:recursive_estimation}, as the solutions yield themselves to simple incremental updates. More general correlation structures in $U$ pose interesting modelling possibilities, but encumber the solution complexity and are thus only considered in the non-recursive setting. For $U$ as the identity matrix, the model reduces to a standard multivariate linear model with i.i.d. residuals.
\subsection{Regularization in linear models}\label{sec:GLM_regularization}
We pose parameter estimation as a ridge regression problem with the objective,
\begin{equation}
    J(\theta; P, Q, R, \theta_0) = \norm{X \theta - Y}_{P, R}^2 + \norm{\theta - \theta_0}_{Q, R}^2.
    \label{eq:ridge_regression_objective}
\end{equation}
where $\theta_0$ is a shrinkage target, $P \in \mathbb{R}^{t \times t}$ a weight matrix, $Q \in \mathbb{R}^{n \times n}$ a regularization parameter and $R \in \mathbb{R}^{m \times m}$. Here we have introduced a weighted Mahalanobis-type norm for matrices, which we define as,
% Perhaps this should be refered to as a Mahalanobis-type norm?
\begin{equation}
    \norm{X}_{A,B}^2 = \text{tr}\left(B X^\top A X \right).
    \label{eq:weighted_matrix_norm}
\end{equation}
It holds that $\norm{X}_{A,B}^2 = \vect{X}^\top (B \otimes A) \vect{X}$, so that the norm is well-defined if $B \otimes A$ is \gls{spd}. This also implies that $J$ is convex in $\theta$, if $R \otimes P$ and $R \otimes Q$ are both \gls{spd}.
%\begin{equation}
%    \norm{A}_{F,W}^2 = \text{tr}\left( A^\top W A \right).
%\end{equation}
Loosely speaking, the objective seeks to minimize the weighted variance of the prediction error, while penalizing deviations of the parameters from the shrinkage target $\theta_0$. In the following we will assume that $P$, $Q$ and $R$ are all \gls{spd}. It now holds, that the minimizer of \eqref{eq:ridge_regression_objective} with respect to $\theta$ is independent of $R$ and has a closed form solution. To state the result, we introduce the notation,
\begin{equation*}
    K = X^\top P X, \quad L = X^\top P Y, \quad H = X^\top P U P X.
\end{equation*}
Minimizing \eqref{eq:ridge_regression_objective}, the parameter estimate can be written,
\begin{equation}
    \hat{\theta} = \left(K + Q\right)^{-1} \left( L + Q \theta_0 \right).
    \label{eq:ridge_regression_solution}
\end{equation}
Under the model \eqref{eq:glm}, the distribution of the estimate is matrix normal with,
\begin{equation}
    \label{eq:ridge_regression_solution_distribution}
    \hat{\theta} \sim \mathcal{MN} \left(\theta_\text{ridge}, \Psi_{\text{ridge}}, V \right),
\end{equation}
where the mean and covariance parameters are given by,
\begin{equation*}
    \theta_\text{ridge} = \left(K + Q\right)^{-1} \left(X^\top P X \theta + Q \theta_0 \right), \quad
    \Psi_{\text{ridge}} = \left(K + Q\right)^{-1} H \left(K + Q\right)^{-1}.
\end{equation*}
Note, the mean estimate $\hat{\theta}$ is equivalent to the \gls{map} estimate of $\theta$ under the model \eqref{eq:glm}, when assuming $U, V$ known and prior distribution,
\begin{equation*}
    \theta \sim \mathcal{MN}(\theta_0, Q^{-1}, R^{-1}).
\end{equation*}
For a full discussion of the Bayesian interpretation of the model and a proof of the above statement, see Appendix \ref{sec:appendix_bayesian_regression}. Given the results in the Bayesian setting, it is straightforward to argue that $\hat{\theta}$ is the minimizer of \eqref{eq:ridge_regression_objective}, see Appendix \ref{sec:appendix_ridge_regression}. The predictive distribution for a new observation $Y^*$ given new features $X^*$ is then given by,
\begin{equation}
    Y^* | X^*, X, Y \sim \mathcal{MN}\left( X^* \theta_\text{ridge}, X^* \Psi_{\text{ridge}} X^{* \top} + U^*, V \right),
    \label{eq:ridge_regression_predictive_distribution_main}
\end{equation}
It is worthwhile to consider some variants of the model \eqref{eq:glm} and their relation to standard methods. Firstly, setting $Q = 0$, it is easy to see that the minimization of \eqref{eq:ridge_regression_objective} and the solution \eqref{eq:ridge_regression_solution} reduces to a \gls{wls} type problem and that the parameter estimate becomes unbiased. More generally, the model \eqref{eq:glm} may be expressed in the vectorized form,
\begin{equation*}
    \vect{Y} = (I_m \otimes X) \vect{\theta} + \vect{E}, \quad \vect{E} \sim N(0, V \otimes U),
\end{equation*}
where $I_m$ is the identity matrix of size $m$. This expression conforms to the standard linear model form, directly allowing the use of associated theory and methods to recover estimates of the parameter and predictive distributions. The major drawback of this approach is the additional computational complexity implied by the larger design and covariance matrices. These issues could potentially be alleviated by exploiting the sparse structures, especially in the case of $U$ being equal to the identity or other sparse matrix. In the vectorized form, diagonal $U$ corresponds to,
\begin{equation*}
    y_s = (I_m \otimes X) \vect{\theta} + e_s, \quad e_s \sim N(0, u_{ss} V).
\end{equation*}
For $U = I_m$, this is exactly the formulation used for the hierarchical \gls{glm} by \cite{mollerOptimalForecastReconciliation2024}, i.e. equation \eqref{eq:hierarchy_glm}. They suggest that this formulation may be used for model reduction using a test strategy to remove insignificant parameters for hierarchical reconciliation. This corresponds to removing individual columns from the design matrix $I_m \otimes X$, which is not possible in the matrix form \eqref{eq:glm} without adding algebraic constraints to the optimization problem. Despite the flexibility of the vectorized formulation, we will use the matrix form \eqref{eq:glm} for its notational simplicity and computational efficiency.

\subsection{Recursive Estimation}\label{sec:recursive_estimation}
% Out of sample, simply remove the last term (no covariance between parameter and future observation noise)
For use in an online setting, we assume our inputs and outputs are time series with equidistant sampling times. To emphasize the temporal nature of the data, we will attach a subscript $s = 1, 2, \ldots, t$ to all variables, where $t > 0$ is the current time step or simply the (current) time. For example, $y_t$ is the most current observation and $Y_t$ is the collection of all observations currently available from time $1$ to time $t$. At time $t$ we are interested in forecasting $y_{t+k}$ for some integer horizon $k > 0$. For any such $t$, forecasts should be made online, i.e. using models estimated from all data available up to that time. We then consider a sequence of linear models of the form \eqref{eq:glm} parameterized by the current time $t$,
\begin{equation}
    Y_t = X_t \theta + E_t, \quad E_t \sim MN(0, U_t, V), \quad t = 1, 2, \ldots
    \label{eq:glm_t}
\end{equation}
As new observations become available, the matrices $X_t$, $Y_t$, $U_t$ and regularization parameter $P_t$ increase in size, so that numerics eventually become infeasible. To address this, we adopt a recursive approach (see e.g. \parencite{madsenTimeSeriesAnalysis2007,bacherOnlineforecastPackageAdaptive2023}). We do not pursue a general solution in this case, but assume special structures of $P_t$ and $U_t$. Using the subscript notation, we can express the ridge regression estimator \eqref{eq:ridge_regression_solution_distribution} at time $t$ as,
\begin{equation}
    \theta_{\text{ridge}, t} = \left( K_t + Q \right)^{-1} \left( L_t + Q \theta_0 \right),
    \label{eq:ridge_regression_solution_t}
\end{equation} 
and the covariance of the estimator as,
\begin{equation}
    \Psi_{\text{ridge}, t} = \left( K_t + Q \right)^{-1} X_t^\top H_t \left( K_t + Q \right)^{-1}.
    \label{eq:ridge_regression_covariance_t}
\end{equation}
We now assume independent residuals by setting $U_t = \text{diag}(u_1, \ldots, u_t)$, and choose $P_t$ as a diagonal weight matrix,
\begin{equation*}
    \Lambda_t = \text{diag}(\lambda^{t-1}, \lambda^{t-2}, \ldots, 1).
\end{equation*}
For implementation purposes, the main result is recursive update equations for the matrices $K_t$, $L_t$ and $H_t$.
\begin{corollary}[Update equations]\label{cor:recursive_updates}
Under the above assumptions, the ridge regression estimator \eqref{eq:ridge_regression_solution_t} and its covariance \eqref{eq:ridge_regression_covariance_t} can be updated recursively in terms of the matrices $K_t$, $L_t$ and $H_t$ as,
\begin{equation}
    \begin{aligned}
        K_{t+1} &= \lambda K_t + x_{t+1} x_{t+1}^\top, \\
        L_{t+1} &= \lambda L_t + x_{t+1} y_{t+1}^\top, \\
        H_{t+1} &= \lambda^2 H_t + u_{t+1} x_{t+1} x_{t+1}^\top.
    \end{aligned}
    \label{eq:ridge_regression_recursive_updates}
\end{equation}
\end{corollary}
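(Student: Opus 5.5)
The plan is to write each of $K_t$, $L_t$ and $H_t$ as a weighted sum of outer products over the rows of $X_t$ and $Y_t$, and then split off the newest term. The diagonal structure of $P_t=\Lambda_t$ and $U_t$ is what makes this work. Since the $s$-th row of $X_t$ is $x_s^\top$ and the $s$-th row of $Y_t$ is $y_s^\top$, for any diagonal $D=\text{diag}(d_1,\ldots,d_t)$ we have
\begin{equation*}
    X_t^\top D X_t = \sum_{s=1}^t d_s x_s x_s^\top, \qquad X_t^\top D Y_t = \sum_{s=1}^t d_s x_s y_s^\top .
\end{equation*}
Applying this with $D=\Lambda_t$ gives
\begin{equation*}
    K_t=\sum_{s=1}^t \lambda^{t-s}x_sx_s^\top, \qquad L_t=\sum_{s=1}^t \lambda^{t-s}x_sy_s^\top .
\end{equation*}
Because diagonal matrices commute, $\Lambda_t U_t \Lambda_t=\text{diag}(\lambda^{2(t-s)}u_s)_{s}$, and hence
\begin{equation*}
    H_t=\sum_{s=1}^t \lambda^{2(t-s)}u_s x_sx_s^\top .
\end{equation*}

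Next I would pass from $t$ to $t+1$. The matrices $X_{t+1}$ and $Y_{t+1}$ are obtained by appending the rows $x_{t+1}^\top$ and $y_{t+1}^\top$. The weight matrix satisfies $\Lambda_{t+1}=\text{diag}(\lambda\Lambda_t,1)$, and $U_{t+1}=\text{diag}(U_t,u_{t+1})$. In the sum for $K_{t+1}$, the terms with $s\le t$ carry weights $\lambda^{t+1-s}=\lambda\cdot\lambda^{t-s}$, so they add up to $\lambda K_t$. The remaining $s=t+1$ term has weight $1$, which gives $K_{t+1}=\lambda K_t+x_{t+1}x_{t+1}^\top$. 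The argument for $L_{t+1}$ is identical. For $H_{t+1}$ the exponent is doubled, so the old terms pick up a factor $\lambda^2$. The new term has weight $\lambda^0u_{t+1}=u_{t+1}$, giving the third equation in \eqref{eq:ridge_regression_recursive_updates}.

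Finally, I would observe that \eqref{eq:ridge_regression_solution_t} depends on the data only through $K_t$ and $L_t$, with $Q$ and $\theta_0$ fixed. Likewise, \eqref{eq:ridge_regression_covariance_t} depends on the data only through $K_t$ and $H_t$. Here $H_t$ should be read as $X_t^\top P_tU_tP_tX_t$, as in the earlier definition of $H$. The stray $X_t^\top$ in \eqref{eq:ridge_regression_covariance_t} appears to be a typo, and I would read it in line with $\Psi_\text{ridge}$. With that reading, recursively updating the three matrices suffices to update both the estimator and its covariance.

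The only subtle step is the bookkeeping for $H_t$. One has to check that the weight matrix enters twice, so the old terms scale by $\lambda^2$ rather than $\lambda$. One also has to check that $u_{t+1}$ enters unweighted, because $\lambda^0=1$. Everything else is a routine expansion of block-diagonal products.
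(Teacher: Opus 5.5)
Your proposal is correct and takes the same route as the paper's proof. Both write $K_t$, $L_t$ and $H_t$ as $\lambda$-weighted sums of outer products using the diagonal structure of $\Lambda_t$ and $U_t$, then split off the $s=t+1$ term, and in both the doubled weight gives the $\lambda^2$ factor for $H$. Your reading of the stray $X_t^\top$ in \eqref{eq:ridge_regression_covariance_t} as a typo also matches the paper's appendix, where the covariance is written as $\Psi_{\text{ridge},t+1} = (K_{t+1}+Q)^{-1}H_{t+1}(K_{t+1}+Q)^{-1}$.
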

\noindent The corollary follows from Proposition \ref{prop:bayesian_posterior} in the appendix, and is proven in Appendix \ref{sec:appendix_recursive}. The choice of $P_t = \Lambda_t$ is motivated by commonly used exponential smoothing filters. Indeed, for univariate target variable $Y_t$, setting $Q_t=0$ and $u_t = 1$, the above reduces to the recursive least squares solution used by e.g. \cite{bacherOnlineforecastPackageAdaptive2023}. For a forecast horizon $h > 0$, the predictive distribution for time $t+h$ is then given by,
\begin{equation*}
    y_{t+h} \mid Y_t \sim \mathcal{N} \left( \hat{\theta}_{\text{ridge}, t} x^*_{t+h}, (x^{* \top}_{t+h} \Psi_{\text{ridge}, t} x^*_{t+h} + 1) V \right). 
    \label{eq:ridge_regression_recursive_prediction}
\end{equation*}
As we will typically not know the variance $V$, it should be estimated from the data. For example, we may use the weighted empirical variance estimate at time $t$,
\begin{equation*}
    \hat{V_t} = (X_t \hat{\theta}_t - Y_t)^\top \Lambda_t (X_t \hat{\theta}_t - Y_t).
\end{equation*}
Assuming large $t$, the memory is saturated, and the variance update may be expressed recursively as \parencite{bergsteinssonHeatLoadForecasting2021},
\begin{equation}
    \hat{V}_{t+1} = \lambda \hat{V}_t + (1- \lambda) (\hat{\theta}_t^\top x_{t+1} - y_{t+1})(\hat{\theta}_t^\top x_{t+1} - y_{t+1})^\top.
    \label{eq:pred_err_var_recursive}
\end{equation}
We do not pursue a correction to the uncertainty of the predictive distribution implied by the estimation of $V$, but note its potential importance for small samples or short memory.

\subsection{Application to hierarchies}\label{sec:GLM_hierarchies}
Hierarchical forecast reconciliation can be treated using the methods discussed so far by applying the result of \cite{mollerOptimalForecastReconciliation2024}. To do so, we organize the top and bottom level forecasts $\hat{y}_{\text{bot}, s}$ and $\hat{y}_{\text{top}, s}$ and the bottom level observations $y_{\text{bot}, s}$ for $s = 1, 2, \ldots, t$ as rows in the matrices $\hat{Y}_\text{top}$, $\hat{Y}_\text{bot}$ and $Y_\text{bot}$. We then pose a linear model for the bottom level base forecast errors of the sort \eqref{eq:glm},
\begin{equation}
    Y_\text{bot} - \hat{Y}_\text{bot} = (\hat{Y}_\text{top} - \hat{Y}_\text{bot} \mathcal{S}_\text{top}^\top) \theta + E, \quad E \sim \mathcal{MN}(0, U, \Sigma_r).
    \label{eq:matrix_hierarchy_glm}
\end{equation}
Parameters are estimated by minimizing the equivalent of \eqref{eq:ridge_regression_objective} for the model \eqref{eq:matrix_hierarchy_glm},
\begin{equation}
    \norm{(\hat{Y}_\text{top} - \hat{Y}_\text{bot} \mathcal{S}_\text{top}^\top) \theta - (Y_\text{bot} - \hat{Y}_\text{bot})}_{P, \Sigma_r^{-1}}^2 + \norm{\theta - \theta_0}_{Q, \Sigma_r^{-1}}^2.
    \label{eq:hierarchical_obj}
\end{equation}
For $P = U = I_t$, the objective reduces to \eqref{eq:tikhonov_hierarchy}, so that estimates correspond to usual reconciliation methods. For instance, choosing $Q = \Sigma_{\theta_0}^{-1}$ and $\theta_0$ as in \eqref{eq:shrinkage_prior}-\eqref{eq:shrinkage_prior_cov} we recover the shrinkage estimate \eqref{eq:shrinkage_estimate} as the solution \eqref{eq:ridge_regression_solution}. To see the equivalence with \eqref{eq:hierarchy_glm}, we note that when $U = I_t$, a single sample of \eqref{eq:matrix_hierarchy_glm} reduces to,
\begin{equation*}
    \begin{aligned}
        \left(y_{\text{bot}, s} - \hat{y}_{\text{bot}, s}\right)^\top &= \left(\hat{y}_{\text{top}, s} -  S_{\text{top}}\hat{y}_{\text{bot}, s}\right)^\top \theta + e^\top, \\
        &= I_m \otimes \left(\hat{y}_{\text{top}, s} - S_{\text{top}}\hat{y}_{\text{bot}, s}\right) \vect{\theta} + e^\top,  \quad e \sim N(0, \Sigma_r).
    \end{aligned}
\end{equation*}
The model \eqref{eq:matrix_hierarchy_glm} produces predictions $\tilde{\Delta}_\text{bot} = \Delta_\text{bot} \mid Y_\text{bot}$ of the bottom level base forecast errors $\Delta_\text{bot} = Y_\text{bot} - \hat{Y}_\text{bot}$.
Bottom level reconciled forecasts are obtained as $\tilde{Y}_\text{bot} = \hat{Y}_\text{bot} + \tilde{\Delta}_\text{bot}$, or equivalently by computing $\tilde{Y}_\text{bot} = \hat{Y} \mathcal{P}(\theta)^\top$ where the matrix $\mathcal{P}(\theta)$ is defined by \cite{mollerOptimalForecastReconciliation2024} as,
\begin{equation*}
    \label{eq:hierarchy_projection}
    \mathcal{P}(\theta)^\top = \begin{bmatrix} \theta^\top \\  I_{n_{\text{bot}}} - S_{top}^\top \theta^\top \end{bmatrix}.
\end{equation*}
The first term of the objective \eqref{eq:hierarchical_obj} can then be expressed as,
\begin{equation*}
    \norm{\hat{Y}\mathcal{P}(\theta)^\top - Y_{bot}}_{P, \Sigma_r^{-1}}^2.
\end{equation*}
It is then clear, that the solution $\hat{\theta}$ should minimize the bottom level reconciled forecast error, subject to the summation constraint imposed by $\mathcal{P}(\theta)$ and the regularization penalties on the parameters. In particular, setting $P=I_t$ and identifying $Q = \Sigma_{\theta_0}^{-1}$ and $\theta_0$ as in \eqref{eq:shrinkage_prior}-\eqref{eq:shrinkage_prior_cov} we recover the shrinkage estimate \eqref{eq:shrinkage_estimate} as the solution \eqref{eq:ridge_regression_solution}. This is simply restating the findings of \cite{mollerOptimalForecastReconciliation2024}, here based on a matrix variate model. In the following examples however, we do not pursue this exact choice of prior, but consider alternatives that are simpler in this setting. As a special case, we consider $\theta_0 = 0$ and $Q = q I_{n_\text{top}}$ where $q \rightarrow \infty$. In the limit, the prior approaches a point mass at zero, so that in turn the parameters $\theta$ tend to zero, corresponding to a zero prediction of the bottom level forecast errors. The resulting reconciled forecasts will then be the bottom-up forecasts. Note, whilst this prior is common for ridge regression, it differs from the usual shrinkage prior used for forecast reconciliation \parencite{nystrupTemporalHierarchiesAutocorrelation2020}. Since strong regularization under this prior favours the bottom-up forecasts, it is important to consider whether the hierarchy's nodes of interest will benefit from the choice. Using \eqref{eq:matrix_hierarchy_glm}-\eqref{eq:hierarchical_obj}, it is possible to follow either a Bayesian or ridge regression approach by imposing appropriate assumptions on $P$, $U$, $Q$ and $\theta$. As we aim to use the model for online forecast reconciliation, we will pursue the ridge regression formulation and recursive estimation scheme described in the previous sections. Assume then that $\hat{\theta}$ was found using \eqref{eq:ridge_regression_solution} and that we are given new base forecasts $\hat{Y}_\text{bot}^*$, $\hat{Y}_\text{top}^*$, error row covariance $U^*$ and need to predict deviations from the yet unknown output $Y_\text{bot}^*$. Using the input features $X^* = \hat{Y}_\text{top}^* - \hat{Y}_\text{bot}^* \mathcal{S}_\text{top}^\top$, \eqref{eq:ridge_regression_predictive_distribution_main} gives a predictive distribution for $\tilde{\Delta}_\text{bot}^*$, i.e. the conditional distribution of the corresponding bottom level base forecast errors $\tilde{\Delta}_\text{bot}^* = (Y_\text{bot}^* - \hat{Y}_\text{bot}^*)$.
The full reconciled forecast may be obtained by shifting and aggregating the prediction,
\begin{equation*}
\tilde{Y}^* = (\hat{Y}_\text{bot}^* + \tilde{\Delta}_\text{bot}^*) \mathcal{S}^\top.
\end{equation*}
The distribution of the reconciled forecasts follows from the properties of the matrix normal distribution as,
\begin{equation*}
    \tilde{Y}^* \sim \mathcal{MN} \left((\hat{Y}_\text{bot}^* + X^* \hat{\theta}_\text{ridge}) \mathcal{S}^\top, X^* \Psi_{\text{ridge}} X^{* \top} + U^*, \mathcal{S} \Sigma_r \mathcal{S}^\top \right).
\end{equation*}
For a single new sample at time $t+1$, the row covariance reduces to a scalar value dependent on $u_{t+1} = U_{t+1,t+1}$, and by letting $x_{t+1} = \hat{y}_{\text{top}, t+1} - \mathcal{S}_\text{top} \hat{y}_{\text{bot}, t+1}$, the predictive distribution of the reconciled forecast collapses to a multivariate normal distribution,
\begin{equation*}
    \tilde{y}_{t+1} \mid Y_t \sim N \left(\mathcal{S} (\hat{y}_{\text{bot}, t+1} + \hat{\theta}_\text{ridge} x_{t+1}), (x_{t+1}^{*\top} \psi_\text{ridge} x_{t+1}^* + u_{t+1}^*) \mathcal{S} \Sigma_r \mathcal{S}^\top \right).
    \label{eq:hierarchy_predictive_distribution_singleton}
\end{equation*}
Note, when the distribution of the reconciled forecasts are only required per sample, both mean and covariance can be retrieved from the per-sample distribution of the bottom level base forecast errors. In particular for the covariance, it is sufficient to track the combined covariance matrix of the bottom levels $\Sigma_\text{bot} =(x_{t+1}^{*\top} \psi_\text{ridge} x_{t+1}^* + u_{t+1}) \Sigma_r$ in order to recover the reconciled forecast variance.
\section{Implementation}\label{sec:implementation}
Hierarchical forecast reconciliation was implemented in the \textit{PyOnlineForecast} package \parencite{Ronlev-Knudsen_PyOnlineForecast_2026} using the models described in sections \ref{sec:recursive_estimation}-\ref{sec:GLM_hierarchies}. The package employs a notion of \textit{transformations}, which define modular computational procedures that can be applied to data. Simple transformations may be composed to create more complex procedures or composite transformations. Forecast reconciliation can be viewed as one such composite transformation, mapping base forecasts and bottom level observations to reconciled forecasts. The implementation is split into 
\begin{enumerate}
    \item a generic prediction model using the recursive approach described in \ref{sec:recursive_estimation} and 
    \item a set of utilities based on \ref{sec:GLM_hierarchies} to accommodate hierarchical forecast data for use with the regression model and subsequent construction of reconciled forecasts.
\end{enumerate}
The regression model is implemented under the assumption that $U = I$ using variance estimation for $V$ as in \eqref{eq:pred_err_var_recursive} and provides estimates of the per-sample predictive distribution of forecasts. These computations are natively restricted to be performed one sample at a time and are implemented accordingly. This is not true for the construction of reconciled forecasts from the regression model outputs, which can be performed as batch operations, also when only providing per-sample predictive distributions. In the online setting, forecasts and observations for a given time are not available at the same time. Instead, at time $t$ we collect new observations $y_t$ and forecasts $\hat{y}_{t+h}$ for some integer forecast horizon $h > 0$. To construct inputs for the regression model, forecasts should be aligned to observations by back-shifting the forecasts $h$ time steps. Similarly, for training the regression model to that same horizon, features should be lagged $h$ steps relative to the target variable. The latter is automatically handled by the implementation of the ridge regression model transformation, whilst the former is resolved within the reconciliation transformation. The implementation in the \textit{PyOnlineForecast} package is adequately represented by a computational graph of the intermediate transformations. 

\begin{figure}[h]
\centering
\begin{tikzpicture}[
    node distance=1.5cm and 2.5cm,
    box/.style={rectangle, draw, align=center},
    >=stealth
]
    % Define node contents
    \def\contentYHatTop{$\hat{y}_{\text{top}, t+h} = \hat{y}_{t+h, \mathcal{I}_\text{top}}$}
    \def\contentYHat{$\hat{y}_{t+h}$}
    \def\contentYBot{$y_{\text{bot}, t}$}
    \def\contentX{$x_{t+h} = \hat{y}_{\text{top}, t+h} - \mathcal{S}_{\text{top}} \hat{y}_{\text{bot}, t+h}$}
    \def\contentYHatBot{$\hat{y}_{\text{bot}, t+h} = \hat{y}_{t+h, \mathcal{I}_\text{bot}}$}
    \def\contentY{$\Delta_{\text{bot}, t} = B^h \hat{y}_{\text{bot},t+h} - \hat{y}_{\text{bot}, t}$}
    \def\contentOutput{\shortstack{$\tilde{y}_{t+h} = \mathcal{S} \left(\hat{y}_{\text{bot},t+h} + \Delta_{\text{bot}, t+h}\right)$\\$\tilde{\Sigma}_{t+h} = \mathcal{S} \hat{\Sigma}_{\text{bot}, t+h} \mathcal{S} ^\top$}}
    \def\contentRRR{$\tilde{\Delta}_{\text{bot}, t+h}$, $\tilde{\Sigma}_{\text{bot}, t+h}$ computed c.f. \eqref{eq:ridge_regression_recursive_prediction}, \eqref{eq:pred_err_var_recursive} using \eqref{eq:matrix_hierarchy_glm}-\eqref{eq:hierarchical_obj}.}
    
    % Fixed grid with explicit positioning: columns at 0cm, 2.5cm, 5cm; rows at 0cm, -1.5cm, -3cm, -4.5cm
    \node[box] (YHatTop) at (0cm, 0cm) {\contentYHatTop};
    \node[draw, trapezium, trapezium left angle=70, trapezium right angle=110, align=center] (YHat) at (5cm, 0cm) {\contentYHat};
    \node[draw, trapezium, trapezium left angle=70, trapezium right angle=110, align=center] (YBot) at (10cm, 0cm) {\contentYBot};
    
    \node[box] (X) at (0cm, -2cm) {\contentX};
    \node[box] (YHatBot) at (5cm, -2cm) {\contentYHatBot};
    \node[box] (Y) at (10cm, -2cm) {\contentY};
    
    \node[draw, rectangle, rounded corners=3pt, double, align=center] (Output) at (5cm, -4cm) {\contentOutput};
    
    \node[box] (RRR) at (5cm, -6cm) {\begin{varwidth}{7cm}\centering \contentRRR \end{varwidth}};

    % Node identifiers
    \node[anchor=south east, font=\scriptsize] at (YHat.north east) {I1};
    \node[anchor=south east, font=\scriptsize] at (YBot.north east) {I2};
    \node[anchor=south east, font=\scriptsize] at (YHatTop.north east) {S1};
    \node[anchor=south east, font=\scriptsize] at (YHatBot.north east) {S2};
    \node[anchor=south east, font=\scriptsize] at (X.north east) {D};
    \node[anchor=south east, font=\scriptsize] at (Y.north east) {R};
    \node[anchor=south east, font=\scriptsize] at (RRR.north east) {P};
    \node[anchor=south east, font=\scriptsize] at (Output.north east) {O};

    % Edges
    \draw[->] (YHat) -- (YHatTop);
    \draw[->] (YHat) -- (YHatBot);

    \draw[->] (YHatBot) -- (X);
    \draw[->] (YHatTop) -- (X);

    \draw[->] (YBot) -- (Y);
    \draw[->] (YHatBot) -- (Y);

    \draw[->] (X) |- (RRR);
    \draw[->] (Y) |- (RRR);

    \draw[->] (RRR) -- (Output);
    \draw[->] (YHatBot) -- (Output);
\end{tikzpicture}
\caption{Schematic of the implementation of forecast reconciliation in the \textit{PyOnlineForecast} package. Rectangles represent transformations with arrows indicating data flow. Inputs are drawn using trapezoids and the output computation is highlighted with double edges and rounded corners. I1-I2: input data. S1-S2: splitting of base forecasts into top and bottom levels. D: construction of design matrix from base forecasts. R: back-shifting of bottom level base forecasts and construction of response matrix from bottom level base forecast errors. P: prediction of bottom level base forecast errors using the recursive ridge regression transformation. O: construction of reconciled forecasts for transformation output.}
\label{fig:transformation_graph}
\end{figure}

Figure \ref{fig:transformation_graph} illustrates the most important transformation compositions and output computations used to define the reconciliation transformation. At runtime, the transformation receives base forecasts $\hat{y}_{t+h}$ and bottom level observations $y_{\text{bot}, t}$ as inputs. The base forecasts are split into top and bottom levels by selecting indices  $\mathcal{I}_\text{top} = \{1, \ldots, n_\text{top}\}$ and $\mathcal{I}_\text{bot} = \{n_\text{top} + 1, \ldots, n_\text{top} + n_\text{bot}\}$, respectively. For computing the response matrix, the bottom level base forecasts are back-shifted, written as $B^h \hat{y}_{\text{bot}, t+h} = \hat{y}_{\text{bot}, t}$. The design and response matrices are then constructed based on current and past data. The ridge regression module (Figure \ref{fig:transformation_graph}, box P) uses these matrices as inputs, and internally handles back-shifting, training and prediction of the bottom level forecast errors. The module further supports estimation of covariance matrix $V$ using equation \eqref{eq:pred_err_var_recursive}, but does not incorporate estimation uncertainty of this parameter into the predictive distribution. The predictions are then used by the reconciliation transformation to construct reconciled forecasts which are returned as output.

The implementation can be extended to temporal hierarchies by preprocessing inputs before applying the reconciliation transformation. This step constructs bottom level observations for the temporal hierarchy based on a single observed time series. The revised input is formed by constructing additional input columns of back-shifted observations, one for each node in the bottom level of the temporal hierarchy. The temporal reconciliation procedure is then specified by defining the summation matrix as before, and a back-shift structure for the bottom level observations. For an input variable $z_t$ taking values in $\mathbb{R}^n$, the back-shifted bottom level observations are constructed using the operator $\mathcal{B}$ defined as,
\begin{equation*}
    y_{\text{bot}, t} = \mathcal{B} z_t = \left(B^{j_1} z_{i_1, t}, \ldots, B^{j_m} z_{i_m, t}\right)^\top = \left(z_{i_1, t-j_1}, \ldots, z_{i_m, t-j_m}\right)^\top.
\end{equation*}
In practice the operator can be specified as a set of pairs $(i_k, j_k)$ for $k = 1, \ldots, m$. Here $1 < i_k < n$ and $0 < j_k$ are the indices of the input and a corresponding lag. The operator effectively maps state observations to a multivariate time series obeying the temporal hierarchy structure. This approach comes with the caveat that some observations will be repeated in the bottom level observations $y_{\text{bot}, t}$, when they are updated at every time step $t$. This may artificially introduce a strong autocorrelation structure in the data. To mitigate this issue, the implementation accepts a boolean argument controlling the update frequency. When enabled, parameters are only updated every $h$-th time step, where $h$ is the maximum of the input lags.
\section{Adaptive heat load forecasting}\label{sec:adaptive_heat_load_forecasting}
We demonstrate the proposed methods by forecasting district heating loads for three areas in the Gellerup district of Aarhus, Denmark. Data was provided by the district heating operator Kredsløb. The data consists of three years of heat load measurements from March 2022 to April 2025. The measurements were recorded at a local heat exchanger supplying heat to three different regions; south, east and west. The measurements were provided on a five-minute frequency. Pre-processing of the data includes removal of outliers, resampling to hourly frequency using averaging and filling empty values using persistence (forward fill). Outlier removal was performed by assigning standard scores to the observations, and discarding observations with high absolute scores. The standard scores were computed using a running estimate of the mean and variance with an exponential forgetting factor of $0.995$, and the threshold for removal was set to three standard deviations. Figure \ref{fig:load_timeseries} shows the cleaned time series for each of the three regions.
\begin{figure}
    \centering
    \includegraphics{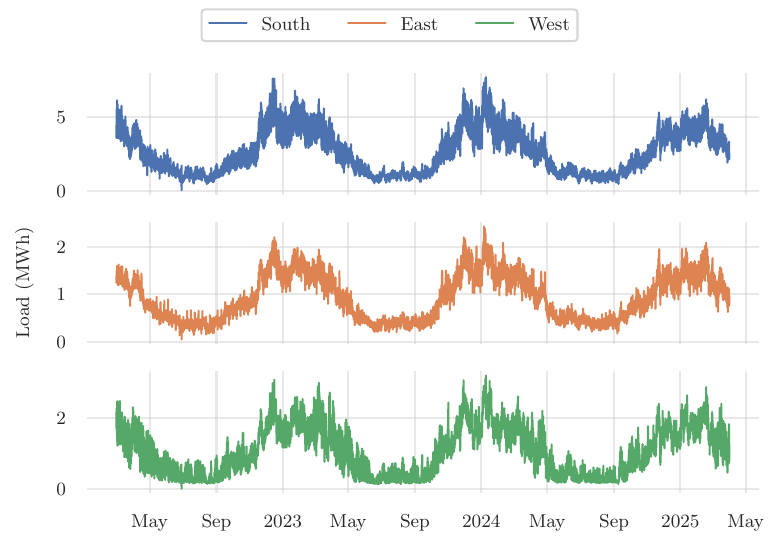}
    \caption{Heat load data provided by Kredsløb after removing outliers and resampling to hourly frequency.}
    \label{fig:load_timeseries}
\end{figure}
% Hierarchy
A temporal hierarchy is constructed for a daily cycle consisting of 24 hourly forecasts at the bottom level and aggregation levels 6, 12, and 24 hours. Figure \ref{fig:hierarchy_24h} illustrates the structure of the hierarchy. Throughout the results, performance metrics are computed based on residuals after a burn-in period of 90 days.
\begin{figure}
\begin{center}
\begin{tikzpicture}[
    node distance=1.5cm,
    every node/.style={draw, circle, minimum size=0.3cm, font=\tiny}
]

    \draw[thick] (0, -2.8) -- (9.6, -2.8);
    
    % Add tick marks at positions 1-24
    \foreach \i in {1,2,...,24} {
        \draw[thick] (\i*0.4-0.2, -2.8) -- (\i*0.4-0.2, -2.7);
    }
    
    % Add horizon labels
    \node[below, font=\small, draw=none, circle=false] at (4.8, -2.8) {Horizon};
    \node[below, font=\tiny, draw=none, circle=false] at (0.2, -2.9) {1h};
    \node[below, font=\tiny, draw=none, circle=false] at (9.4, -2.9) {24h};
    
    % Add aggregation level labels on the left
    \node[left, font=\tiny, draw=none, circle=false] at (-0.3, -2) {1h};
    \node[left, font=\tiny, draw=none, circle=false] at (-0.3, -1) {6h};
    \node[left, font=\tiny, draw=none, circle=false] at (-0.3, 0) {12h};
    \node[left, font=\tiny, draw=none, circle=false] at (-0.3, 1) {24h};
    
    % Bottom level (24 nodes)
    \foreach \i in {1,2,...,24} {
        \node (b\i) [align=center, inner sep=0pt,minimum size = 0.35cm] at (\i*0.4-0.2, -2) {\fontsize{5}{5}\selectfont\i};
    }
    
    % Second level (4 nodes at positions 6, 12, 18, 24)
    \node (s1)[inner sep=0pt,minimum size = 0.6cm] at (2.2, -1) {\fontsize{5}{5}\selectfont 1-6};  % position 6
    \node (s2)[inner sep=0pt,minimum size = 0.6cm] at (4.6, -1) {\fontsize{5}{5}\selectfont 7-12};  % position 12
    \node (s3)[inner sep=0pt,minimum size = 0.6cm] at (7.0, -1) {\fontsize{5}{5}\selectfont 13-18};  % position 18
    \node (s4)[inner sep=0pt,minimum size = 0.6cm] at (9.4, -1) {\fontsize{5}{5}\selectfont 19-24};  % position 24
    
    % Third level (2 nodes at positions 12, 24)
    \node (t1)[inner sep=0pt,minimum size = 0.7cm] at (4.6, 0) {\fontsize{5}{5}\selectfont 1-12};   % position 12
    \node (t2)[inner sep=0pt,minimum size = 0.7cm] at (9.4, 0) {\fontsize{5}{5}\selectfont 12-24};   % position 24
    
    % Top level (1 node at position 24)
    \node (root)[inner sep=0pt,minimum size = 0.8cm] at (9.4, 1) {\fontsize{5}{5}\selectfont 1-24};
    
    % Connections from bottom to second level (groups of 6)
    \foreach \i in {1,2,...,6} {
        \draw (b\i) -- (s1);
    }
    \foreach \i in {7,8,...,12} {
        \draw (b\i) -- (s2);
    }
    \foreach \i in {13,14,...,18} {
        \draw (b\i) -- (s3);
    }
    \foreach \i in {19,20,...,24} {
        \draw (b\i) -- (s4);
    }
    
    % Connections from second to third level (groups of 2)
    \draw (s1) -- (t1);
    \draw (s2) -- (t1);
    \draw (s3) -- (t2);
    \draw (s4) -- (t2);
    
    % Connections from third to top level
    \draw (t1) -- (root);
    \draw (t2) -- (root);
\end{tikzpicture}
\end{center}
\caption{Temporal hierarchy used for load forecasting. The bottom level considers horizons 1 through 24 hours, aggregated into 6-hour, 12-hour, and 24-hour levels.}
\label{fig:hierarchy_24h}
\end{figure}

% Base forecasts
\paragraph{Base forecasts.}
Base forecasts were generated for all three regions and each node in the hierarchy using the \textit{PyOnlineForecast} package. Each node in the hierarchy corresponds to a separate forecast model using a number of input features derived from data transformations of observations and forecasts for the given horizon. As inputs, the models consider previous load measurements, local temperature measurements and weather forecasts. The local temperature measurements were provided by Kredsløb, originating from a nearby weather station, see Figure \ref{fig:temperature_timeseries}. This data was provided on an hourly resolution, and was pre-processed by removing outliers and filling missing values as described for the load data. Weather forecasts were obtained from the MET Norway THREDDS data server \parencite{metnorway_thredds}, extracted for a nearby geographical location. Each weather forecast provides hourly predictions for the following 58 hours and was updated every six hours. To align with load data, forecasts were resampled to hourly frequency by reusing forecasts between updates. This way, 24 hour-ahead forecasts were made available every hour. For any hour in between the six-hourly updates, the forecast for horizon $h$ is simply the most recent forecast of horizon $h+k$ where $k$ is the number of hours since the last update. Feature selection and design was done by hand, with hyperparameters being tuned for the east region and applied to all three regions. A linear model based on section \ref{sec:recursive_estimation} was used to estimate parameters and generate forecasts. For further details on the base forecasts, refer to Appendix \ref{sec:appendix_base_models}.
% - Temperature data
% - Weather forecasts
\begin{figure}
    \centering
    \includegraphics{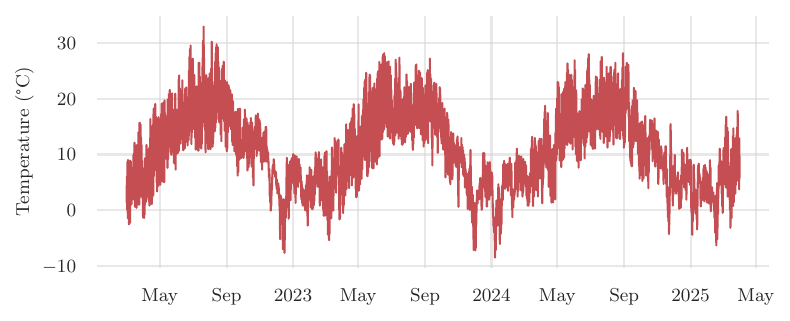}
    \caption{Local temperature measurements.}
    \label{fig:temperature_timeseries}
\end{figure}
\begin{table}[h]
\centering
\begin{tabular}{ccccc}
\toprule
\textbf{Case} & Update frequency & $\theta_0$ & $Q$ & $\lambda_\text{mem}$ \\
\midrule
Main & Hourly & 0 & 0.001 & 0.995 \\
\midrule
1 & \multirow{3}{*}{Daily} & \multirow{3}{*}{0} & 0.001 & \multirow{3}{*}{0.995} \\
2 & & & 1 & \\
3 & & & 100 & \\
\bottomrule
\end{tabular}
\caption{Regularization parameters for forecast reconciliation. Main case updates parameters at every time step, whereas cases 1-3 only update parameters every 24 hours to avoid reusing observations and base forecasts for estimation.}
\label{tab:rrr_combined_config}
\end{table}

% Reconciliation
\paragraph{Forecast reconciliation.}
Forecasts were reconciled in the temporal hierarchy for each region separately. The same configuration was used for each region, with details provided in Table \ref{tab:rrr_combined_config}. Note in particular that the shrinkage target was chosen as zero, with varying levels of regularization being tested. In all cases, the regularization parameter $Q$ was chosen as the identity scaled by the value in the table. Two different setups were tested when reconciling; the first performs parameter updates at every data update (every hour), whilst the second only does so every 24 hours to avoid reusing base forecasts and observations. The latter setup generally yielded less accurate results with higher variance estimates (see Figure \ref{fig:mse_vs_avg_variance_east} for the east region). This is not surprising, as the parameter estimates are initially based on fewer observations, and may not adapt to changes as quickly as the former setup. In an attempt to mitigate the effect, we further investigate the use of hourly updates with different levels of regularization. In the following, we refer to the first setup as the main case, and the second setup as cases 1-3 depending on the level of regularization.
\begin{figure}
    \centering
    \includegraphics{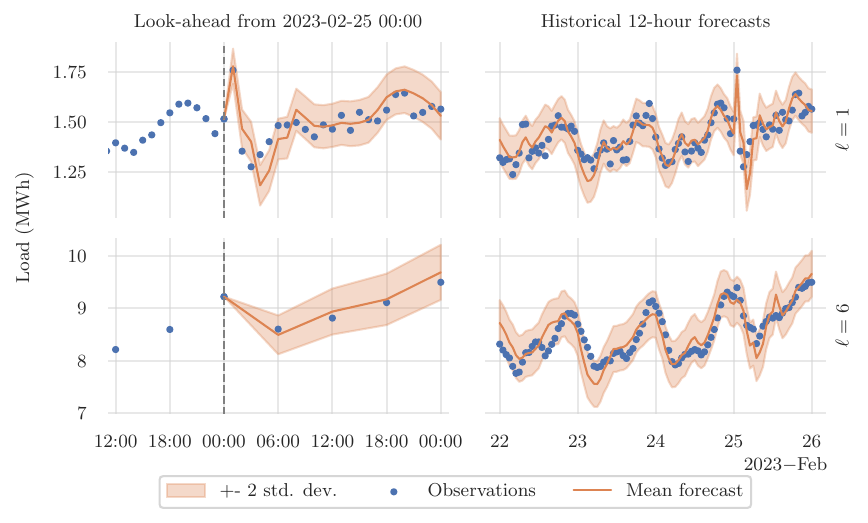}
    \caption{Sample of reconciled forecasts for the east region including prediction bands as +/- 2 standard deviations. For reference, observed values and aggregations of these are shown in blue. Left column shows forecasts for all horizons assuming information available up to the "current" hour, indicated by the vertical dashed line. The right column shows historical forecasts for a fixed 12-hour horizon. Top rows contain forecasts for the bottom level (hourly), while the bottom rows are for the 6-hour aggregation level.}
    \label{fig:forecast_samples_pred_int}
\end{figure}
Beginning with the main case, Figure \ref{fig:forecast_samples_pred_int} shows examples of the reconciled forecasts for the east region. In general, residuals were larger for the south region, which may be attributed to generally higher levels of heat consumption and variance. Comparing east and west, consumption levels are similar indicating that some difference in performance between the regions may be due to model design and hyperparameter tuning being performed specifically for the east region. Generally, errors increase with aggregation level and forecast horizon. To measure the effect of forecast reconciliation, we define the relative reduction in \gls{rmse} as,
\begin{equation*}
    \text{RRMSE} = \frac{\text{RMSE of base forecasts} - \text{RMSE of reconciled forecasts}}{\text{RMSE of base forecasts}}.
\end{equation*}
Forecasts reconciled using parameter updates every hour are improved for every region, level and horizon. Figures \ref{fig:sunburst_rrmse} and \ref{fig:rrmse_boxplot} summarize all \gls{rrmse} scores in the hierarchy. 
\begin{figure}
    \centering
    \includegraphics{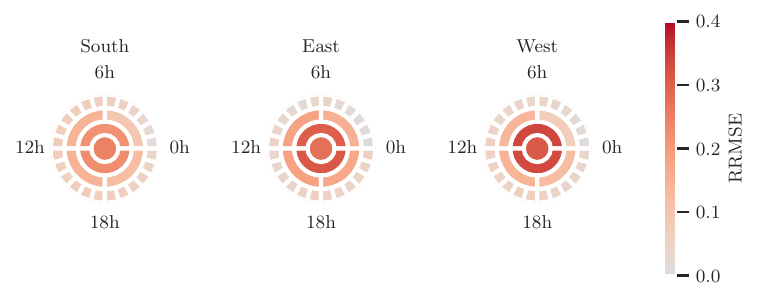}
    \caption{Relative reduction in \gls{rmse} achieved through forecast reconciliation (main case). Tiles represent forecasts organized such that increasing angle and decreasing radius match horizon and level.}
    \label{fig:sunburst_rrmse}
\end{figure}
\begin{figure}
    \centering
    \includegraphics{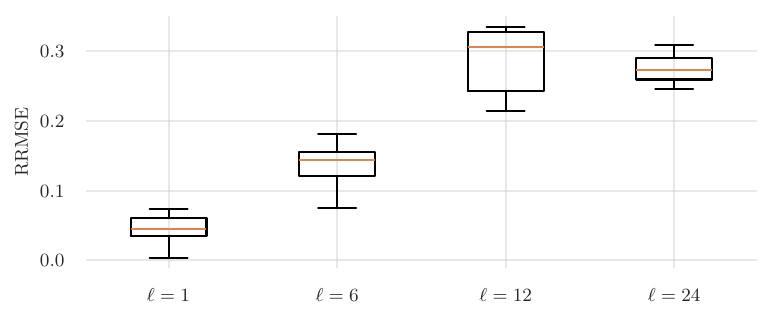}
    \caption{\gls{rrmse} of reconciled forecasts by aggregation level (main case).}
    \label{fig:rrmse_boxplot}
\end{figure}
% Regularization and parameter update frequency
\paragraph{Update frequency and regularization.}
When reconciliation in the temporal hierarchy is performed with masked data updates, mean and variance parameters are only updated once every 24 hours. Despite concerns on violating independence assumptions, updating parameters at every step generally yielded the best predictive performance. Figure \ref{fig:mse_vs_avg_variance_east} compares the \gls{mse} and average prediction error variance estimates for the east region for both approaches using the main case and case number 1. In the main case \gls{mse} and variance estimates follow a similar trend, increasing with horizon at a diminishing rate. For case 1, variance estimates become less stable and may overshoot realized errors significantly. Similar trends are observed in the other regions (see Figure \ref{fig:mse_vs_avg_variance_other_regions}).

For masked updates, cases 1-3 demonstrate the effects of varying regularization. Figure \ref{fig:rrmse_lineplot_all_regions_level_1} shows \gls{rrmse} for the bottom level reconciled forecasts in each region (see Figure \ref{fig:rrmse_diff_sunburst_all_regions_regularisation} in Appendix \ref{sec:appendix_figures} for a summary for all levels). As the extent of regularization increases, the \gls{rrmse} tends to increase for small horizons whilst it decreases for longer horizons. The base forecasts are updated at every step and use relatively short memory factors for the lower aggregation levels, implying that parameter estimates may change quickly. If important dynamics are captured by the parameter update scheme, reconciled forecast performance may degrade as parameter update frequency is reduced. Due to the choice of memory parameters, this effect should be most pronounced for the bottom level forecasts. Short horizons may also be more sensitive to fast changes, as they are more likely to reflect the current state of the system. Increasing the degree of regularization pushes the reconciled forecasts towards the bottom level base forecasts, thus providing better performance for short horizons. Conversely, for the longer horizons, the benefit of reconciliation is most pronounced when little regularization is applied.

Alternatively, the difference in how regularization affects forecasts in the hierarchy may be explained by worse top level base forecasts. Again, high regularization would push the reconciled forecasts towards the superior bottom level base forecasts. Increasing regularization for reconciliation when using hourly parameter updates showed no clear benefit in \gls{rrmse}, making the former explanation seem more plausible. 

\begin{figure}
    \centering
    \includegraphics{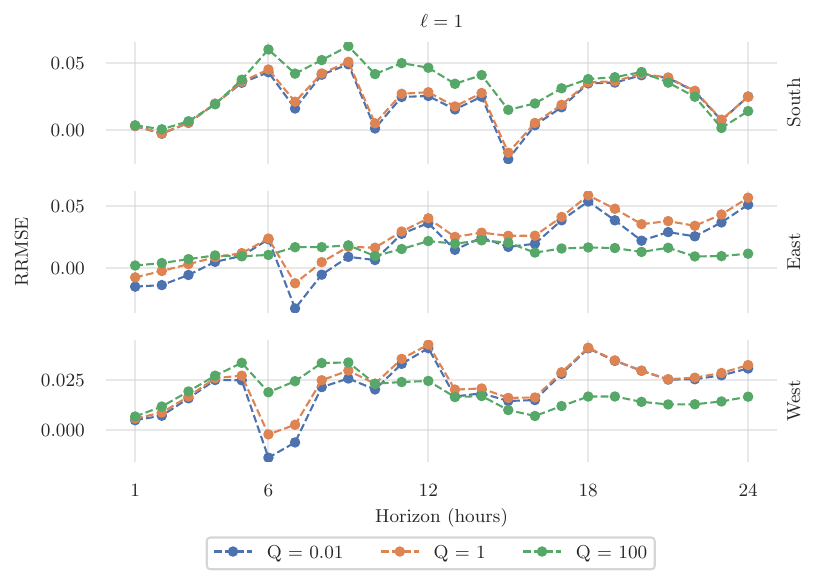}
    \caption{\gls{rrmse} scores for cases 1-3 for bottom level reconciled forecasts compared to base forecasts for each region and forecast horizon.}
    \label{fig:rrmse_lineplot_all_regions_level_1}
\end{figure}
\begin{figure}
    \centering
    \includegraphics{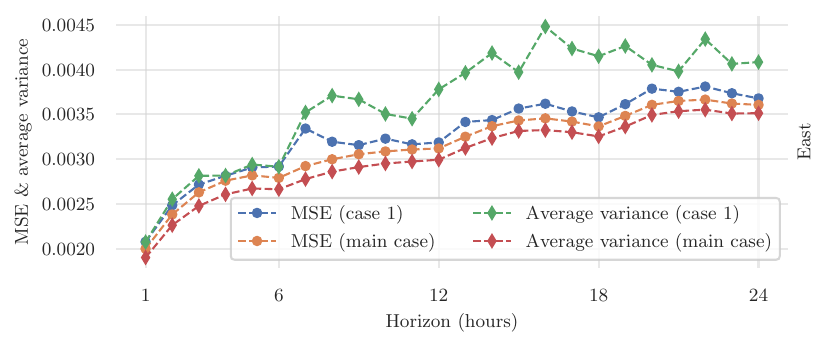} 
    \caption{MSE and averaged forecasts of prediction error variance estimates for the east region, based on reconciled forecasts using parameter updates every hour (main case) or every 24 hours (case 1).}
    \label{fig:mse_vs_avg_variance_east}
\end{figure}

\section{Conclusion}\label{sec:conclusion}
Formulating linear forecast reconciliation as a \gls{glm} enables the use of common statistical methods and algorithms suitable for online and recursive estimation and prediction. Based on the work of \cite{mollerOptimalForecastReconciliation2024}, a generic framework for representing linear hierarchical reconciliation was presented. This framework uses a suitable regularized multivariate linear model, enabling online adaptive learning and uncertainty quantification of parameters and forecasts. The framework begins with the \gls{glm} \eqref{eq:glm} and may use either of two approaches for inference, ridge regression or Bayesian estimation with matrix normal priors. The two approaches are closely related when regularization parameters are chosen as precision matrices of the row covariances of the residuals and the parameter priors. In this case, both approaches produce the same algebraic form of the parameter estimates and consequently the same point predictions of new data. However, due to different model assumptions, the two approaches differ in their uncertainty quantification. Depending on the application, either approach may be preferred. The ridge regression approach allows independent choices of $P$ and $U$, enabling a simple approach for implementing estimation schemes with weighting per observation by choosing $P$ accordingly. This is generally not possible in the Bayesian setting, where the interpretation $P=U^{-1}$ imposes modelling restrictions. To align with existing methods on \gls{rls}, the ridge regression approach was developed further by choosing the regularization parameter $P$ to define exponentially decaying weights for the observations. With this choice of $P$ and diagonal $U$, it was possible to derive simple recursive updates for parameter and prediction distributions. 

A more general treatment of the regularization matrix $P$ and structure of the covariance matrix $U$ for recursive estimation may be considered. Both matrices have row dimension equal to the number of observations used for estimation, which may eventually become problematically large in an online setting. To enable recursive updates, it may be useful to consider banded or sparse structures that discard long term correlations in the residuals. Similarly, for the column covariance, specific parametrizations as demonstrated by \cite{mollerLikelihoodbasedInferenceTemporal2024} may be beneficial. 

Whilst the \gls{glm} formalism used in this work may be specialized to usual shrinkage estimators used in forecast reconciliation, it also suggests the use of generic regularization techniques. Lasso \parencite{tibshiraniRegressionShrinkageSelection1996} or elastic net \parencite{zouRegularizationVariableSelection2005} regression methods are natural candidates. Both methods tend to produce sparse solutions, which are particularly useful for reducing the number of parameters in large hierarchies. In general however, the use of such methods does not permit closed form solutions, which are desirable for recursive estimation methods. Parameter selection strategies may also be performed based on statistical tests, as suggested by \cite{mollerOptimalForecastReconciliation2024}. In this case, the graph formulation of hierarchies may be useful, as it introduces an (informal) notion of distance between nodes using path lengths. It seems reasonable that more distant nodes are less likely to provide useful information for reconciling forecasts, suggesting that the corresponding weights be tested for removal.

For variance estimation, further use of the results from \cite{mollerOptimalForecastReconciliation2024} may also be incorporated. In particular, including a prior for the column, and potentially row, covariances may improve estimation in cases with limited data. The predictive distribution should then be generalized to account for the uncertainty of the variance estimates.

For general hierarchies, especially spatio-temporal hierarchies, further generalizations of the matrix variate linear model may be considered. In fact, much work on multiway or array data analysis already exists, see for example \cite{lockTensoronTensorRegression2018,manceurMaximumLikelihoodEstimation2013, nzabanita2015maximum}. If tensor normal distributions are assumed, it is known that the covariance structure may be parameterized by a sequence of covariance matrices, one for each node, joined by a Kronecker structure as in the matrix normal case (see e.g. \cite{manceurMaximumLikelihoodEstimation2013}). For spatio-temporal hierarchies, the separation of covariance structures may naturally decompose into spatial and temporal components as is characteristic of multiway data.

The recursive ridge regression method described in section \ref{sec:recursive_estimation} was implemented numerically and applied to a district heating load forecasting problem. This implementation uses the \textit{PyOnlineForecast} package, which uses a \gls{rls}-like algorithm based on the content of section \ref{sec:recursive_estimation}. The case used a 24-hour temporal hierarchy, and demonstrated that meaningful improvements in forecast accuracy can be achieved, in particular for longer horizons and higher aggregation levels. Regularization was applied to shrink reconciliation weights towards zero. As a consequence of using an online and adaptive parameter estimation scheme, forecast reconciliation was most successful when updates to parameter estimates were made at the frequency of the lowest temporal aggregation levels. Using this approach, all non-top level base forecast and all bottom level observations were used multiple times in parameter updates. When updates were instead restricted to the frequency of the fully aggregate 24-hour level, results were generally worse and required higher regularization to achieve the best performance. This effect is believed to be exclusive to online reconciliation in temporal hierarchies, as not all observations in the hierarchy are available at the same time. Relaxing either assumption, by considering non-temporal hierarchies or in sample performance, should alleviate this issue. In all cases, shrinkage regularization was necessary to avoid rank deficiency in the least squares problems for parameter estimation. Increasing shrinkage pushed results towards bottom-up reconciliation, which was required when updates were made less frequently, otherwise bottom level reconciled forecasts would lose accuracy compared to base forecasts.
\section*{Acknowledgement}
This work is supported by \textit{INSIEME} (Digital Europe No. 101194952),  \textit{ELEXIA} (Horizon Europe No. 101075656), \textit{ARV} (EU H2020 No. 101036723),  \textit{SEEDS} (Horizon Europe No. 10042024), \textit{BIPED} (Horizon Europe No. 101139060), \textit{Weather2X} (EUDP 640232-511303) and \textit{DynFlex}, which is one of the Innomission Green Fuel projects.

\printglossaries

\printbibliography
\appendix
\section{Useful identities}\label{sec:appendix_useful_identities}
Based on \cite{glanzExpectationMaximizationAlgorithmMatrix2013}, we say that a random variable $X$ taking values in $\mathbb{R}^{t \times m}$ follows a matrix normal distribution, written
\begin{equation*}
    X \sim \mathcal{MN}(M, U, V),
\end{equation*}
if its vectorization follows a multivariate normal distribution as,
\begin{equation*}
    \vect{X} \sim N(\vect{M^\top}, V \otimes U).
\end{equation*}
The density of $X$ is then given by,
\begin{equation*}
    (2\pi) ^{-\frac{1}{2} tm}|\left(V \otimes U\right)^{-1}|^\frac{1}{2} \exp \left( -\frac{1}{2} \text{tr}\left( V^{-1} (X - M)^\top U^{-1} (X - M) \right) \right).
\end{equation*}
Using \cite{petersen2008matrix} as a reference, we first note some useful identities. For matrices $X,Y, F, G$ with $F, G$ symmetric, the cyclic property of the trace easily yields,
\begin{equation}
    \text{tr} \left( G (X - Y)^\top F (X - Y) \right) = \text{tr} \left( G X^\top F X \right) + \text{tr} \left( G Y^\top F Y \right) - 2 \text{tr} \left( G X^\top F Y \right).
    \label{eq:trace_identity}
\end{equation}
The product of two matrix normal densities with the same column covariance matrix $V$ is also matrix normal. Letting $\Phi(\cdot \mid \mathbb{M}, \mathbb{U}, V)$ denote the density of $\mathcal{MN}(\mathbb{M}, \mathbb{U}, V)$ on $\mathbb{R}^{t \times m}$, then the following identity holds,
\begin{equation*}
    \Phi(\cdot \mid M_1, U_1, V) \Phi(\cdot \mid M_2, U_2, V) \propto \Phi(\cdot \mid M, U, V),
\end{equation*}
where the row covariance and mean parameters are given by,
\begin{equation*}
    U = V \otimes (U_1^{-1} + U_2^{-1})^{-1}, \quad M = (U_1^{-1} + U_2^{-1})^{-1} (U_1^{-1} M_1 + U_2^{-1} M_2).
\end{equation*}
The result follows from vectorizing and using properties of the multivariate normal distribution. To show this, we use the well known result, that the product of the densities of two independent multivariate normal distributions $\mathcal{N}(\mu_1, \Sigma_1)$ and $\mathcal{N}(\mu_2, \Sigma_2)$, is proportional to another multivariate normal density $\mathcal{N}(\mu, \Sigma)$ with,
\begin{equation*}
    \Sigma = (\Sigma_1^{-1} + \Sigma_2^{-1})^{-1}, \quad \mu = \Sigma (\Sigma_1^{-1} \mu_1 + \Sigma_2^{-1} \mu_2).
\end{equation*}
Substituting the parameters of the vectorized forms, $\vect{M_1}$, $\vect{M_2}$, $\Sigma_1 = V \otimes U_1$ and $\Sigma_2 = V \otimes U_2$ yields,
\begin{equation*}
    \Sigma = (V^{-1} \otimes U_1^{-1} + V^{-1} \otimes U_2^{-1})^{-1} = V \otimes (U_1^{-1} + U_2^{-1})^{-1},
\end{equation*}
and using the mixed product property of the Kronecker product,
\begin{equation*}
    \begin{aligned}
        \mu &= \Sigma ( (V^{-1} \otimes U_1^{-1}) \vect{M_1} + (V^{-1} \otimes U_2^{-1}) \vect{M_2} ) \\
        &= (V \otimes (U_1^{-1} + U_2^{-1})^{-1}) \vect{ (U_1^{-1} M_1 + U_2^{-1} M_2) V^{-1} } \\
        &= \vect{ (U_1^{-1} + U_2^{-1})^{-1} (U_1^{-1} M_1 + U_2^{-1} M_2) }.
    \end{aligned}
\end{equation*}
From the vectorized form of the matrix normal distribution, it is clear that $\mu, \Sigma$ is simply the mean and covariance of $\vect{Z}$, so the result follows. A similar result holds for the sum of two independent matrix normally distributed variables with the same column covariance matrix $V$. Let,
\begin{equation*}
    Z_1 \sim \mathcal{MN}(M_1, U_1, V), \quad Z_2 \sim \mathcal{MN}(M_2, U_2, V),
\end{equation*}
then,
\begin{equation}
    Z_1 + Z_2 \sim \mathcal{MN}(M_1 + M_2, U_1 + U_2, V).
    \label{eq:matrix_normal_independent_sums}
\end{equation}
This follows directly from vectorizing and using properties of the multivariate normal distribution and reducing the sum of the variances as,
\begin{equation*}
    \var{\vect{Z_1 + Z_2}} = \var{\vect{Z_1}} + \var{\vect{Z_2}} = V \otimes U_1 + V \otimes U_2 = V \otimes (U_1 + U_2).
\end{equation*}
Finally, we note that for $Z \sim \mathcal{MN}(M, U, V)$, it holds that for a constant matrix $A$,
\begin{equation}
    A Z \sim \mathcal{MN}(A M, A U A^\top, V).
    \label{eq:matrix_normal_linear_transform}
\end{equation}
This follows by vectorizing to $\vect{Z}$, scaling the variance by $(I \otimes A)$, and reducing the vectorized covariance given by,
\begin{equation*}
    \var{\vect{A Z}} = (I \otimes A) \var{\vect{Z}} (I \otimes A)^\top = (I \otimes A) (V \otimes U) (I \otimes A^\top) = V \otimes (A U A^\top).
\end{equation*}
\section{Bayesian regression}\label{sec:appendix_bayesian_regression}
The usual linear regression model with normal errors and parameter prior, provides well known results for the posterior and predictive distributions. For the sake of completeness, we provide a derivation of corresponding results for the matrix variate linear model with matrix normal errors and priors. In the Bayesian setup, we assume $U$, $V$, $\Psi$ and $\Omega$ known and \gls{spd}, and start from the model,
\begin{equation}
    Y \mid \theta \sim \mathcal{MN}(X \theta, U, V), \quad \theta \sim \mathcal{MN}(\theta_0, \Psi_0, \Omega_0).
    \label{eq:bayesian_model}
\end{equation}
This is the same as the model in \eqref{eq:glm}, with the addition of a matrix normal prior on the parameters with mean $\theta_0$ and covariance parameters $\Psi_0$ and $\Omega_0$.
\subsection{Posterior and MAP estimate}
The linear model \eqref{eq:bayesian_model} lends itself well to Bayesian inference due to its simple linear structure, and conjugate matrix normal prior.
\begin{proposition}[Posterior distribution]\label{prop:bayesian_posterior}
Under the model \eqref{eq:bayesian_model}, if $V = \Omega_0 = \Sigma$, then the posterior distribution of $\theta$ is given by,
\begin{equation}
    \theta | Y \sim \mathcal{MN} (\hat{\theta}, \Psi, \Sigma),
    \label{eq:posterior_matrix_normal}
\end{equation}
with,
\begin{equation*}
    \hat{\theta} = \Psi \left( X^\top U^{-1} Y + \Psi_0^{-1} \theta_0 \right), \quad \Psi = \left( X^\top U^{-1} X + \Psi_0^{-1} \right)^{-1}.
\end{equation*}
Since $\hat{\theta}$ maximizes the Gaussian posterior it is the \gls{map} estimate.
\end{proposition}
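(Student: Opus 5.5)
We prove the claim by applying Bayes' rule, $p(\theta\mid Y)\propto p(Y\mid\theta)\,p(\theta)$, to the two matrix normal densities in \eqref{eq:bayesian_model} with $V=\Omega_0=\Sigma$. We then complete the square in $\theta$ inside a single trace, using \eqref{eq:trace_identity}. Because both factors share the column covariance $\Sigma$, their exponents combine into one expression of the form $-\tfrac12\text{tr}\left(\Sigma^{-1}(\,\cdot\,)\right)$. The resulting quadratic form in $\theta$ is what identifies the posterior as matrix normal with column covariance $\Sigma$.

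\textbf{Step 1 (likelihood exponent).} Apply \eqref{eq:trace_identity} with $G=\Sigma^{-1}$ and $F=U^{-1}$. Keeping only terms that depend on $\theta$, the likelihood exponent is
\[
-\tfrac12\left[\text{tr}\left(\Sigma^{-1}\theta^\top X^\top U^{-1}X\theta\right)-2\,\text{tr}\left(\Sigma^{-1}\theta^\top X^\top U^{-1}Y\right)\right].
\]

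\textbf{Step 2 (prior exponent).} In the same way, with $F=\Psi_0^{-1}$, the prior exponent is
\[
-\tfrac12\left[\text{tr}\left(\Sigma^{-1}\theta^\top\Psi_0^{-1}\theta\right)-2\,\text{tr}\left(\Sigma^{-1}\theta^\top\Psi_0^{-1}\theta_0\right)\right].
\]

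\textbf{Step 3 (combine).} Adding the two exponents gives
\[
-\tfrac12\left[\text{tr}\left(\Sigma^{-1}\theta^\top\Psi^{-1}\theta\right)-2\,\text{tr}\left(\Sigma^{-1}\theta^\top B\right)\right],
\quad
\Psi^{-1}=X^\top U^{-1}X+\Psi_0^{-1},\quad B=X^\top U^{-1}Y+\Psi_0^{-1}\theta_0 .
\]
The matrix $\Psi^{-1}$ is SPD, since it is the sum of the positive semidefinite $X^\top U^{-1}X$ and the SPD $\Psi_0^{-1}$. Hence $\Psi$ exists and is SPD.

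\textbf{Step 4 (complete the square).} Set $\hat\theta=\Psi B$, so that $B=\Psi^{-1}\hat\theta$. Applying \eqref{eq:trace_identity} in reverse gives
\[
\text{tr}\left(\Sigma^{-1}\theta^\top\Psi^{-1}\theta\right)-2\,\text{tr}\left(\Sigma^{-1}\theta^\top\Psi^{-1}\hat\theta\right)
=\text{tr}\left(\Sigma^{-1}(\theta-\hat\theta)^\top\Psi^{-1}(\theta-\hat\theta)\right)-\text{tr}\left(\Sigma^{-1}\hat\theta^\top\Psi^{-1}\hat\theta\right),
\]
where the last term does not depend on $\theta$.

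\textbf{Step 5 (identify the posterior).} The posterior density is therefore proportional to $\exp\left(-\tfrac12\text{tr}\left(\Sigma^{-1}(\theta-\hat\theta)^\top\Psi^{-1}(\theta-\hat\theta)\right)\right)$. This is the kernel of $\mathcal{MN}(\hat\theta,\Psi,\Sigma)$ from Appendix~\ref{sec:appendix_useful_identities}, and normalization fixes the constant, which gives \eqref{eq:posterior_matrix_normal}.

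\textbf{Step 6 (MAP).} A matrix normal density is the vectorized Gaussian $N(\vect{\hat\theta},\Sigma\otimes\Psi)$, whose unique mode is its mean. Hence $\hat\theta$ maximizes the posterior and is the MAP estimate.

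The main point of care is Step 3. Merging the two exponents into a single trace with the common factor $\Sigma^{-1}$ is valid only because $V=\Omega_0$. Without that assumption the $\theta$-quadratic becomes $\vect{\theta}^\top(V^{-1}\otimes X^\top U^{-1}X+\Omega_0^{-1}\otimes\Psi_0^{-1})\vect{\theta}$, which is not of Kronecker form, and the posterior is then Gaussian but not matrix normal. The same conclusion can be reached by vectorizing both densities and applying the product-of-Gaussians identity from Appendix~\ref{sec:appendix_useful_identities}, using $\vect{X\theta}=(I\otimes X)\vect{\theta}$ and the mixed product property. The remaining issues are bookkeeping: keeping $\theta$ versus $\theta^\top$ consistent with the paper's vectorization convention, and checking the symmetry of $F$ and $G$ needed to invoke \eqref{eq:trace_identity}.
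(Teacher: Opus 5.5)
Your proof is correct, but it takes a more direct route than the paper. You add the two exponents under the common factor $\Sigma^{-1}$ and complete the square once.

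The paper instead factors through the MLE. It first rewrites the likelihood as a matrix normal kernel in $\theta$, centred at $\hat{\theta}_\text{l} = (X^\top U^{-1}X)^{-1}X^\top U^{-1}Y$, with row covariance $\Psi_\text{l} = (X^\top U^{-1}X)^{-1}$ and column covariance $V$. It then applies the product-of-matrix-normal-densities identity from Appendix~\ref{sec:appendix_useful_identities} with $U_1 = \Psi_\text{l}$ and $U_2 = \Psi_0$. The factors $(X^\top U^{-1}X)^{-1}$ cancel in the final expressions for $\Psi$ and $\hat{\theta}$.

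The paper's decomposition is modular. It reuses a general lemma, and the intermediate MLE form is exactly what it later uses for the MLE predictive distribution. The cost is that it silently requires $X^\top U^{-1}X$ to be invertible, i.e.\ $X$ of full column rank, which the proposition does not assume. That assumption fails in regimes the paper cares about: few samples early in the online recursion, and the collinear hierarchical designs that the conclusion says need shrinkage to avoid rank deficiency.

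Your argument needs only $\Psi_0^{-1}$ SPD to make $\Psi^{-1}$ SPD, so it proves the statement as written in full generality. It also shows exactly where $V = \Omega_0$ is used.
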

\begin{proof}
In general, it holds for the posterior distribution of $\theta$ that,
\begin{equation*}
    p(\theta | Y) \propto p(Y | \theta) p(\theta).
\end{equation*}
To show the result, we argue that the \gls{mle} of $\theta$ can be expressed as a matrix normal distribution,
\begin{equation}
    p(Y | \theta) \propto \mathcal{MN} \left( \theta; \hat{\theta}_\text{l}, \Psi_\text{l}, V \right).
    \label{eq:likelihood_as_matrix_normal}
\end{equation}
Where the \gls{mle} parameters are given by,
\begin{equation}
    \hat{\theta}_\text{l} = \left( X^\top U^{-1} X \right)^{-1} X^\top U^{-1} Y, \quad \Psi_\text{l} = \left( X^\top U^{-1} X \right)^{-1}.
    \label{eq:mle_parameters}
\end{equation}
Let $\Phi(\cdot \mid \mathbb{M}, \mathbb{U}, \mathbb{V})$ denote the density of the matrix normal distribution with mean $\mathbb{M}$ and covariance parameters $\mathbb{U}$ and $\mathbb{V}$. Then the posterior is proportional to the product of two matrix normal distributions,
\begin{equation}
    p(\theta | Y) \propto \Phi(\theta \mid \hat{\theta}_\text{l}, \Psi_\text{l}, V) \Phi(\theta \mid \theta_0, \Psi_0, \Omega).
    \label{eq:posterior_product_of_matrix_normals}
\end{equation}
We will show that for $V = \Omega = \Sigma$ this product is exactly \eqref{eq:posterior_matrix_normal}. In particular, for some scalar constant $C$,
\begin{equation}
    p(\theta | Y) = C \Phi(\theta \mid \hat{\theta}, \Psi, \Sigma).
    \label{eq:posterior_as_matrix_normal}
\end{equation}
but for this to be a valid density, we must have $C = 1$, and we have found the posterior distribution. To prove the statements used above, we rely on the assumption that $U$, $V$, $\Psi$ and $\Omega$ are \gls{spd}, and use the properties in Appendix \ref{sec:appendix_useful_identities}. To show \eqref{eq:likelihood_as_matrix_normal}, note that the likelihood can be written as,
\begin{equation*}
    p(Y | \theta) \propto \exp \left( -\frac{1}{2} \norm{Y - X \theta}_{U^{-1}, V^{-1}}^2 \right).
\end{equation*}
Then using \eqref{eq:trace_identity}, we rewrite the exponent as a quadratic form in $\theta$,
\begin{equation*}
    \begin{aligned}
        \norm{Y - X \theta}_{U^{-1}, V^{-1}}^2 &= \text{tr} \left(V^{-1} Y^\top U^{-1} Y \right) && \text{constant terms} \\
        &- 2 \text{tr} \left(V^{-1} Y^\top U^{-1} X \theta \right) && \text{1st order terms} \\
        &+ \text{tr} \left(V^{-1} \theta^\top X^\top U^{-1} X \theta \right) && \text{2nd order terms}.
    \end{aligned}
\end{equation*}
For \eqref{eq:likelihood_as_matrix_normal} to be true, the first and second order terms must match those of the exponent of the matrix normal distribution, while the constant terms can be absorbed into the proportionality constant. Choosing the parameters as in \eqref{eq:mle_parameters}, the exponent for the matrix normal distribution \eqref{eq:likelihood_as_matrix_normal} becomes,
\begin{equation*}
    \begin{aligned}
    \norm{\theta - \hat{\theta}_\text{l}}_{\Psi_\text{l}^{-1}, V^{-1}}^2 &= \text{tr} \left( V^{-1} \theta^\top \Psi_\text{l}^{-1} \theta \right)  - 2 \text{tr} \left( V^{-1} \theta^\top \Psi_\text{l}^{-1} \hat{\theta}_\text{l} \right) + \text{tr} \left( V^{-1} \hat{\theta}_\text{l}^\top \Psi_\text{l}^{-1} \hat{\theta}_\text{l} \right). \\
    & = \text{tr} \left( V^{-1} \theta^\top X^\top U^{-1} X \theta \right) - 2 \text{tr} \left( V^{-1} \theta^\top X^\top U^{-1} Y \right) + \text{constant terms}.
    \end{aligned}
\end{equation*}
Up to a constant, this matches the first and second order terms of the likelihood expression, so \eqref{eq:likelihood_as_matrix_normal} holds. To show that \eqref{eq:posterior_product_of_matrix_normals} is identical to \eqref{eq:posterior_as_matrix_normal}, we use the result from Appendix \ref{sec:appendix_useful_identities}. We assume $V = \Omega = \Sigma$, then,
\begin{equation*}
    \begin{aligned}
        \Psi &= \Sigma \otimes (\Psi_\text{l}^{-1} + \Psi_0^{-1})^{-1} = \left( X^\top U^{-1} X + \Psi_0^{-1} \right)^{-1}, \\
        \hat{\theta} &= (\Psi_\text{l}^{-1} + \Psi_0^{-1})^{-1} (\Psi_\text{l}^{-1} \hat{\theta}_\text{l} + \Psi_0^{-1} \theta_0) \\
        &= \left( X^\top U^{-1} X + \Psi_0^{-1} \right)^{-1} \left( X^\top U^{-1} Y + \Psi_0^{-1} \theta_0 \right).
    \end{aligned}
\end{equation*}
This shows that \eqref{eq:posterior_product_of_matrix_normals} is indeed equal to \eqref{eq:posterior_as_matrix_normal} with $C = 1$, and that the posterior distribution is given by \eqref{eq:posterior_matrix_normal}.
\end{proof}
% Predictive distribution
\subsection{Predictive distribution}
For a new pair of observations $Y^*$ and features $X^*$, we assume,
\begin{equation*}
    Y^* \mid \theta \sim \mathcal{MN} \left( X^* \theta, U^*, \Sigma \right).
\end{equation*}
We assume that conditional on $\theta$, $Y^*$ is independent of $Y$, so that
\begin{equation*}
    \left( X^* \theta + E^* \mid Y \right) = X^* \theta \mid Y + E^*,
\end{equation*}
where $E^*$ is a noise term,
\begin{equation*}
    E^* \sim \mathcal{MN} \left( 0, U^*, \Sigma \right).
\end{equation*}
Using equation \eqref{eq:matrix_normal_linear_transform}, we have that,
\begin{equation*}
    X^* \theta \mid Y \sim \mathcal{MN} \left(X^* \hat{\theta}, X^* \Psi X^{* \top}, \Sigma \right),
\end{equation*}
Since both distributions are matrix normal and share the same column covariance $\Sigma$, by equation \eqref{eq:matrix_normal_independent_sums}, the sum is also matrix normal,
\begin{equation}
    Y^* \mid Y \sim \mathcal{MN} \left( X^* \hat{\theta}, X^* \Psi X^{* \top} + U^*, \Sigma \right).
    \label{eq:predictive_distribution}
\end{equation}
% Frequentist parameter variance
By a similar argument as above, using instead the parameters from \eqref{eq:mle_parameters}, we get the \gls{mle} predictive distribution with known variance parameters,
\begin{equation*}
    Y^*_\text{l} \sim \mathcal{MN} \left( X^* \hat{\theta}_\text{l}, X^* \Psi_\text{l} X^{* \top} + U^*, \Sigma \right).
\end{equation*} 

\section{Ridge regression}\label{sec:appendix_ridge_regression}
% Ridge regression
If we modify the Bayesian setup by treating the precision matrices $U^{-1}$, $\Psi_0^{-1}$, $\Sigma^{-1}$ as regularization parameters $P$, $Q$, $R$ and do not explicitly model a prior on $\theta$, the problem becomes equivalent to the ridge regression problem posed in section \ref{sec:GLM}. In particular, using model \eqref{eq:glm} and minimizing the ridge regression objective \eqref{eq:ridge_regression_objective} we get the same point estimate for the parameters, i.e.,
\begin{equation}
    \hat{\theta} = \left(K + Q\right)^{-1} \left( L + Q \theta_0 \right), \quad K = X^\top P X, \quad L = X^\top P Y.
    \label{eq:ridge_regression_solution_appendix}
\end{equation}
This is true, since the maximizer of the normal posterior is the minimizer of the the loss function \eqref{eq:ridge_regression_objective} when identifying $P = U^{-1}$, $Q = \Psi_0^{-1}$ and $R = \Sigma^{-1}$. In particular, the minimizer of the loss, is the mean of the corresponding Gaussian posterior. Note, in the Bayesian setting, we condition on $Y$ so that $\hat{\theta}$ is a deterministic quantity. This is not true for the ridge regression model, and $\hat{\theta}$ should be considered a random variable that depends on the observations $Y$ and noise term $E$ as random variables. It is easy to see, that the only source of variance in $\hat{\theta}$ comes from the observation noise, i.e. the term,
\begin{equation*}
    \left(K + Q\right)^{-1} X^\top P E, \quad E \sim \mathcal{MN}(0, U, V).
\end{equation*}
We are thus dealign with a scaled matrix random variable, and using \eqref{eq:matrix_normal_linear_transform}, it follows that,
\begin{equation*}
    \hat{\theta} \sim \mathcal{MN} \left( \theta_\text{ridge}, \Psi_\text{ridge}, V \right).
\end{equation*}
Here the mean parameter is given by the expectation,
\begin{equation*}
    \theta_\text{ridge} = \expect{\hat{\theta}} = \left(K + Q\right)^{-1} \left(X^\top P X \theta + Q \theta_0 \right),
\end{equation*}
and the covariance parameter is given by,
\begin{equation*}
    \Psi_\text{ridge} = \left(K + Q\right)^{-1} H \left(K + Q\right)^{-1}, \quad H = X^\top P U P X.
\end{equation*}
Note that $\theta_\text{ridge}$ is biased, and is never evaluated in practice since the true parameter $\theta$ is unknown. In contrast, the covariance parameter $\Psi_\text{ridge}$ is a function of known quantities only and is useful for quantifying parameter and predictive uncertainty. The prediction model implied by \eqref{eq:glm} is,
\begin{equation*}
    Y^*_\text{ridge} = X^* \hat{\theta} + E^*, \quad E^* \sim \mathcal{MN}(0, U^*, V).
\end{equation*}
Using identities \eqref{eq:matrix_normal_linear_transform} and \eqref{eq:matrix_normal_independent_sums}, the predictive distribution with known variance parameters is then matrix normal with,
\begin{equation}
    Y^*_\text{ridge} \sim \mathcal{MN} \left( X^* \theta_\text{ridge}, X^* \Psi_\text{ridge} X^{* \top} + U^*, V \right).
    \label{eq:ridge_regression_predictive_distribution}
\end{equation}  

\subsection{Recursive updates}\label{sec:appendix_recursive}
For the linear model \eqref{eq:glm_t}, estimation becomes computationally expensive as $t$ increases. Recursive updates for the ridge regression parameters and predictions can be derived in some special cases. To begin, denote features and observations up to time step $t$ using subscript $t$, e.g. $X_t$ and $Y_t$. Similarly, let $K_t$, $L_t$ and $H_t$ correspond to the matrices $K$, $L$ and $H$ defined in the previous. When $P=P_t$ and $U=U_t$ are diagonal matrices with particularly well-behaved updates, we can derive simple closed forms for the posterior updates. If $P_t$ is chosen as an exponentially decaying weight matrix, i.e. with diagonal elements $p_i = \lambda^{t-i}$ as in corollary \ref{cor:recursive_updates}, we recover a form of \gls{rls} solutions. The proof is straightforward, see below.

\begin{proof}[Proof of corollary \ref{cor:recursive_updates}]
    Follows by evaluating the expressions for $K_{t+1}$, $L_{t+1}$ and $H_{t+1}$ as sums over samples. Let $P_t = \text{diag}(\lambda^{t-1}, \lambda^{t-2}, \ldots, \lambda^0)$. For $K_{t+1}$, we have,
\begin{equation*}
    \begin{aligned}
        K_{t+1} &= X_{t+1}^\top P_{t+1} X_{t+1} \\
        &= \sum_{i=1}^{t+1} \lambda^{t+1-i} x_i x_i^\top \\
        &= \lambda \sum_{i=1}^{t} \lambda^{t-i} x_i x_i^\top + x_{t+1} x_{t+1}^\top \\
        &= \lambda K_t + x_{t+1} x_{t+1}^\top,
    \end{aligned}\end{equation*}
    and similarly for $L_{t+1}$ in terms of $L_t$. Letting $u_i$, $i = 1, \ldots, t$ denote the diagonal elements of $U_t$, for $H_t$ we have,
\begin{equation*}
    \begin{aligned}
        H_t &= X_t^\top P_t U_t P_t X_t \\
        &= \sum_{i=1}^{t} \sum_{j=1}^{t} U_{t,ij} \lambda^{t-i} \lambda^{t-j} x_i x_j^\top \\
        &= \sum_{i=1}^{t} u_{t,i} \lambda^{2(t-i)} x_i x_i^\top.
    \end{aligned}
\end{equation*}
For $H_{t+1}$ we get,
\begin{equation*}
    \begin{aligned}
        H_{t+1} &= \sum_{i=1}^{t+1} u_{t+1,i} \lambda^{2(t+1-i)} x_i x_i^\top \\
        &= \lambda^2 \sum_{i=1}^{t} u_{t,i} \lambda^{2(t-i)} x_i x_i^\top + u_{t+1,t+1} x_{t+1} x_{t+1}^\top \\
        &= \lambda^2 H_t + u_{t+1,t+1} x_{t+1} x_{t+1}^\top.
    \end{aligned}
\end{equation*}
\end{proof}
This result allows us to recursively update the parameter estimate, mean and covariance. In practice, we are most interested in the parameter estimate and its covariance,
\begin{equation*}
    \begin{aligned}
        \hat{\theta}_{t+1} &= \left(K_{t+1} + Q\right)^{-1} \left( L_{t+1} + Q \theta_0 \right), \\
        \Psi_{\text{ridge}, t+1} &= \left(K_{t+1} + Q\right)^{-1} H_{t+1} \left(K_{t+1} + Q\right)^{-1}.
    \end{aligned}
\end{equation*}
The predictions are then $y_{t+1}^* = x^*_{t+1} \hat{\theta}_{t+1}$, for which the distribution \eqref{eq:ridge_regression_predictive_distribution} collapses to,
\begin{equation*}
    y_{t+1}^* \sim \mathcal{N} \left( \theta_{\text{ridge}, t+1} x^*_{t+1}, (x^{* \top}_{t+1} \Psi_{\text{ridge}, t+1} x^*_{t+1} + u^*_{t+1}) V \right). 
\end{equation*}
Note, equivalent recursive results could be stated for the Bayesian model, by updating the posterior parameters and the predictive distribution in a similar manner. However, when $P_t$ is considered a hyperparameter, the implied restriction on the variance $U_t = P_t^{-1}$ may be undesirable. In particular, when the diagonal elements $p_{t,i}$ of $P_t$ depend on $t$, the variance model for past observations may tacitly change between time steps.
%\newpage
%\input{sections/observed_basis.tex}
\newpage
\section{Base forecast models \& transformations}\label{sec:appendix_base_models}
To generate base forecasts, the \textit{PyOnlineForecast} package was used \parencite{Ronlev-Knudsen_PyOnlineForecast_2026}. The package provides multiple built-in transformations to be used for feature engineering. Table \ref{tab:transformations} provides a description of the transformations used for generating base forecasts. Input data names and abbreviations for use in the following tables are summarized in Table \ref{tab:data_variable_abbreviations}. For each region and node in the temporal hierarchy \ref{fig:hierarchy_24h}, a separate base forecast model was instantiated, and parameters estimated. This results in a total of 93 models to be instantiated. Each model uses multiple features, meaning that individual descriptions of models is not feasible. Instead, Table \ref{tab:input_features} summarizes all features used across all models, and Table \ref{tab:base_model_features} summarizes their use in forecast model ensembles. Each ensemble maintains models for each horizon for a single level in the hierarchy. The input features are selected to match the forecast horizon of the model, meaning that for horizon $k > 0$, only features with horizon $k$ (forecasts) or $0$ (current, usually shared values) are included.
\begin{table}
\centering
\small
\begin{tabularx}{\textwidth}{lX}
\toprule
\textbf{Transformation} & \textbf{Description} \\
\midrule
\texttt{GetItem} & Selects from output using built in \texttt{\_\_getitem\_\_} methods. \\
\texttt{Subset} & Selects subset of columns based on variable names and forecast horizons. \\
\texttt{DataCleaner} & Provides outlier detection and removal, forward filling, resampling. Uses a running estimate of mean and variance with exponential forgetting factor. \\
\texttt{Scaler} & Multiplies selected columns by a scaling factor. \\
\texttt{Aggregator} & Resamples time series using summation or averaging. \\
\texttt{Index} & Extracts the row index. \\
\texttt{Align} & Combines multiple \texttt{DataFrame}s by reindexing data a shared index and concatenating. \\
\texttt{One} & Provides constant value of one (for intercept terms). \\
\texttt{TimeOfDay} & Computes time of day as a fraction. \\
\texttt{TimeOfWeek} & Provides time of week features. \\
\texttt{FourierSeries} & Computes terms of a Fourier series. \\
\texttt{LowPass} & Applies exponential smoothing filter. \\
\texttt{Disruption} & Provides square wave disruption features. \\
\texttt{Lag} & Shifts time series by specified lag. \\
\texttt{SlidingSum} & Computes sliding window sums. \\
\bottomrule
\end{tabularx}
\caption{Built-in transformations provided by the \textit{PyOnlineForecast} package.}
\label{tab:transformations}
\end{table}
\begin{table}
\centering
\footnotesize
\begin{tabular}{ll}
\toprule
\textbf{Abbreviation} & \textbf{Full Variable Name} \\
\midrule
$E_{\text{east}}$ & \texttt{energi\_oest\_MW} \\
$E_{\text{south}}$ & \texttt{energi\_syd\_MW} \\
$E_{\text{west}}$ & \texttt{energi\_vest\_MW} \\
$T_{\text{local}}$ & \texttt{t\_ambient\_tranbjerg\_degC} \\
$T_{\text{air}}$ & \texttt{air\_temperature\_2m} \\
$RH$ & \texttt{relative\_humidity\_2m} \\
$WS$ & \texttt{wind\_speed\_10m} \\
$SW$ & \texttt{integral\_of\_surface\_downwelling\_shortwave\_flux\_in\_air\_wrt\_time} \\
\bottomrule
\end{tabular}
\caption{Data variable abbreviations.}
\label{tab:data_variable_abbreviations}
\end{table}
\begin{table}
\centering
\footnotesize
\begin{tabular}{l|l|l|l}
\toprule
\textbf{Abbr.} & \textbf{Type} & \textbf{Input(s)} & \textbf{Parameters} \\
\midrule
\multicolumn{4}{c}{\textit{Data Sources}} \\
\midrule
S1 & Source & -- & name="load" \\ % r_load
S2 & Source & -- & name="weather\_forecast" \\ % r_weather_forecast
S3 & Source & -- & name="local\_temp" \\ % r_local_temp
\midrule
\multicolumn{4}{c}{\textit{Pre-processing}} \\
\midrule
P1 & GetItem & S1 & $E_{\text{east}}$, $E_{\text{south}}$, $E_{\text{west}}$ \\ % sub_load
P2 & Subset & S2 & $T_{\text{air}}$, $RH$, $WS$, $SW$, horizons=[1..24] \\ % sub_weather_forecast
P3 & GetItem & S3 & $T_{\text{local}}$ \\ % sub_local_temp
P4 & DataCleaner & P1 & $\alpha$=0.995, freq="5min" \\ % clean_load
P5 & DataCleaner & P2 & $\alpha$=0.995, freq="1h" \\ % clean_weather_forecast
P6 & DataCleaner & P3 & $\alpha$=0.995, freq="1h" \\ % clean_local_temp
P7 & Scaler & P5 & $SW$: 1e-5 \\ % scaled_weather_forecast
P8 & Aggregator & P4 & freq="h", agg="mean" \\ % agg_load
P9 & Index & P7 & -- \\ % index
P10 & Align & P9, P8, P7, P6 \\ % clean_data
\midrule
\multicolumn{4}{c}{\textit{Features}} \\
\midrule
F1 & Index & P10 & -- \\ % clean_index
F2 & One & P10 & -- \\ % const
F3 & TimeOfDay & F1 & -- \\ % tod
F4 & TimeOfWeek & F1 & -- \\ % tow
F5 & FourierSeries & F3 & n=8 \\ % fs_day
F6 & FourierSeries & F4 & n=8 \\ % fs_week
F7 & GetItem & P10 & $T_{\text{air}}$ \\ % air_temp_2m
F8 & GetItem & P10 & $RH$ \\ % rel_humidity_2m
F9 & GetItem & P10 & $T_{\text{local}}$ \\ % local_temp
F10 & GetItem & P10 & $WS$ \\ % wind_speed_10m
F11 & LowPass & F7 & $\alpha$=0.95 \\ % lp_air_temp_lo
F12 & LowPass & F7 & $\alpha$=0.2 \\ % lp_air_temp_hi
F13 & LowPass & F8 & $\alpha$=0.9 \\ % lp_hum
F14 & LowPass & F9 & $\alpha$=0.99 \\ % lp_t_tran
F15 & LowPass & F10 & $\alpha$=0.8 \\ % lp_wind
F16 & GetItem & P10 & $T_{\text{local}}$, $WS$, $T_{\text{air}}$, $SW$ \\ % shared_vars
F17 & Disruption$^\dagger$ & F1 & hour=[1..7], dayofweek=5, horizons=[1..24] \\ % disruptions
\midrule
\multicolumn{4}{c}{\textit{Energy Features$^{*}$}} \\
\midrule
E1 & GetItem & P10 & \texttt{energi\_\{r\}\_MW} ($r \in \{\text{syd}, \text{oest}, \text{vest}\}$) \\ % energy
E2 & LowPass & E1 & $\alpha$=0.2 \\ % lp_energy
E3 & LowPass & E1 & $\alpha$=0.9 \\ % lp_energy
E4 & Lag & E3 & lag=24, offsets=[1..24] \\ % ar_lp_24
E5 & Lag & E3 & lag=48, offsets=[12, 24] \\ % ar_lp_48
E6 & Lag & E3 & lag=72, offsets=[12, 24] \\ % ar_lp_72
E7 & Lag & E3 & lag=168, offsets=[12, 24] \\ % ar_lp_168
\midrule
\multicolumn{4}{c}{\textit{Targets$^{*}$}} \\
\midrule
Y1 & (reference) & E1 & -- \\ % Y[1]
Y2 & SlidingSum & E1 & window=6 \\ % Y[6]
Y3 & SlidingSum & E1 & window=12 \\ % Y[12]
Y4 & SlidingSum & E1 & window=24 \\ % Y[24]
\bottomrule
\end{tabular}
\caption{Overview of the base forecast model features and target variables. $^{*}$Region-specific (South, east and west), $^\dagger$ seven separate transformations.}
\label{tab:input_features}
\end{table}

\begin{table}
\centering
\small
\begin{tabularx}{\textwidth}{l|X|l|l|l}
\toprule
\textbf{Level} & \textbf{Features (X)} & \textbf{Target (Y)} & \textbf{Horizons} & \textbf{Memory} \\
\midrule
1 & F2, F5, F12, F11, F13, F16, E1, E2, E3, E4, F17$^{\dagger}$ & Y1 & 1, 2, ..., 24 & 0.995 \\ 
6 & F2, F5, F12, F11, F13, F16, E1, E2, E3, E4, F17$^{\dagger}$ & Y2 & 6, 12, 18, 24 & 0.995  \\
12 & F2, F6, F11, F13, F16, E3, E4, E5, E6, E7, F17$^{\dagger}$ & Y3 & 12, 24 & 0.995  \\
24 & F2, F6, F11, F14, F15, F13, F16, E3, E5, E6, E7, F17$^{\dagger}$ & Y4 & 24 & 0.9999  \\
\bottomrule
\end{tabularx}
\caption{Forecast model definitions using \texttt{ForecastEnsemble}. For each level, features and target variable are used to construct transformations that generate forecasts for the specified target and horizons. All regions use regularization with \texttt{Q}=0.001, \texttt{theta0} = 0 and \texttt{track\_memory} enabled for covariance estimation. For horizon $h=k$, only features with horizon $k$ (forecasts) or $0$ (observations) are included. $^{\dagger}$F17 (disruptions) only included in the east region model.}
\label{tab:base_model_features}
\end{table}

\newpage
\section{Additional figures}\label{sec:appendix_figures}
Figure \ref{fig:forecast_samples} shows a comparison of base and reconciled forecast for the east region in the main case.
\begin{figure}
    \centering
    \includegraphics{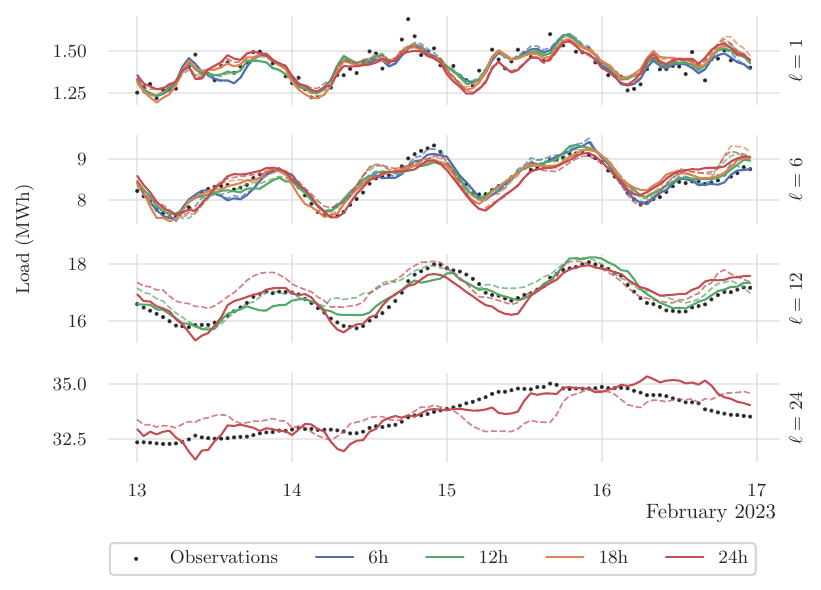}
    \caption{Sample of forecasts for the east region and all aggregation levels. Base forecasts are shown in dashed, whilst reconciled forecasts are shown in solid. Only forecasts for horizons 6, 12, 18 and 24 hours are shown.}
    \label{fig:forecast_samples}
\end{figure}
Figures \ref{fig:sunburst_rmse} and \ref{fig:sunburst_r_squared} provide additional plots of the results for the main case, while Figure \ref{fig:rrmse_diff_sunburst_all_regions_regularisation} summarizes the \gls{rrmse} for cases 1-3. For all figures, tiles represent forecasts organized such that increasing angle and decreasing radius match horizon and level. We define the $R^2$ score as the fraction of variance explained,
\begin{equation*}
    R^2 = 1 - \frac{\sum_{t=1}^T (y_t - \tilde{y}_t)^2}{\sum_{t=1}^T (y_t - \bar{y})^2}.
\end{equation*}
$R^2$ scores generally decrease with increasing forecast horizon, but increase with aggregation level. Both south and west regions show lower $R^2$ scores compared to east, which may again be attributed to model design and hyperparameter tuning being targeted the east region. \gls{rrmse} is generally higher for longer horizons and higher aggregation levels.
\begin{figure}
    \centering
    \includegraphics{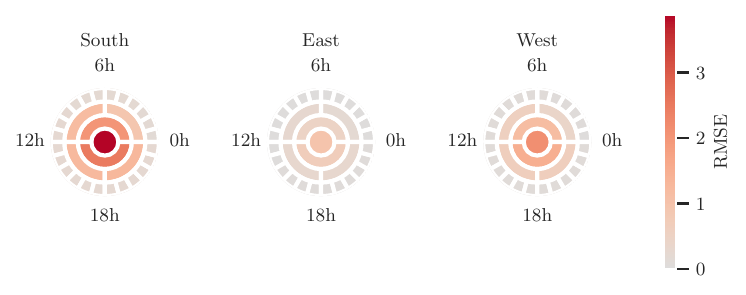}
    \caption{\gls{rmse} score for base forecasts of each region.}
    \label{fig:sunburst_rmse}
\end{figure}
\begin{figure}
    \centering
    \includegraphics{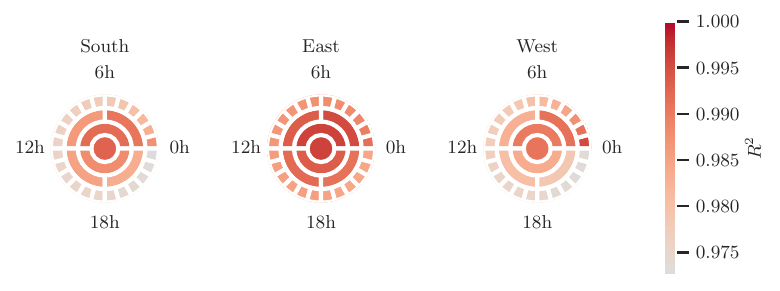}
    \caption{Fraction of variance explained by reconciled forecasts ($R^2$) from the main case.}
    \label{fig:sunburst_r_squared}
\end{figure}
\begin{figure}
    \centering
    \includegraphics{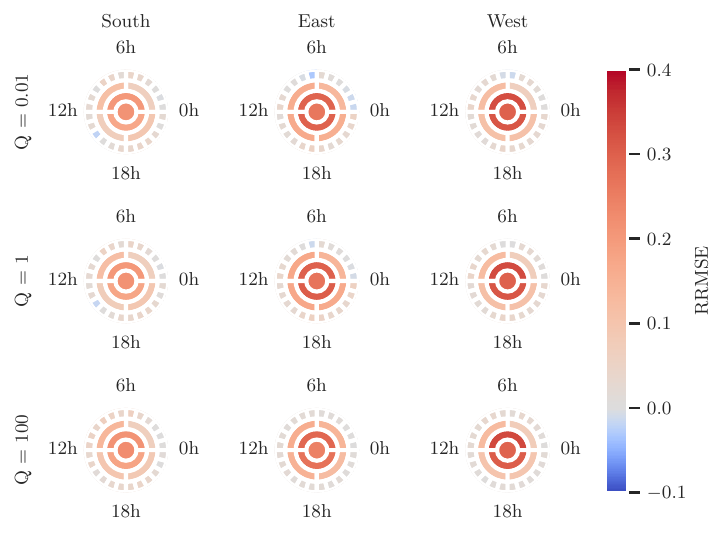}
    \caption{\gls{rrmse} for increasing regularization for each region, aggregation level, and forecast horizon for cases 1-3. Note that the colour scale is asymmetrical around zero, since all but a few forecasts show a reduction in \gls{rmse} compared to the base forecasts.}
    \label{fig:rrmse_diff_sunburst_all_regions_regularisation}
\end{figure}
\begin{figure}
    \centering
    \includegraphics{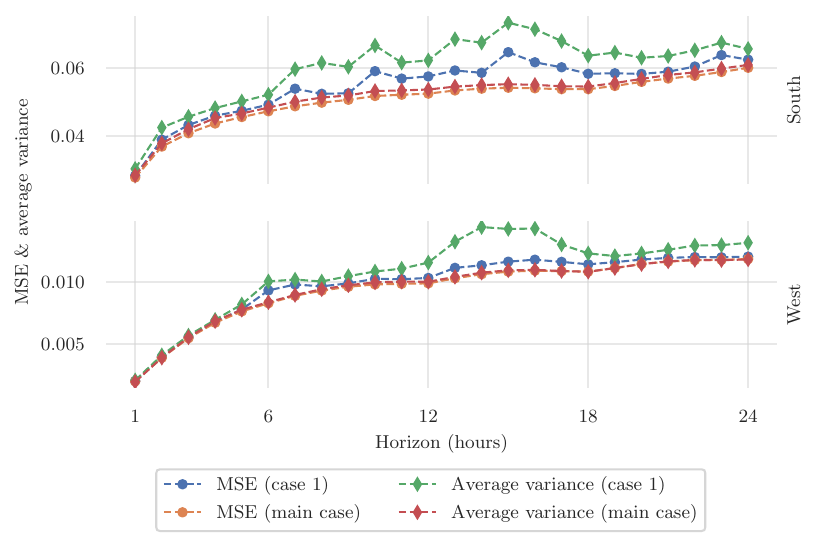} 
    \caption{MSE and averaged forecasts of prediction error variance estimates for the south and west regions based on reconciled forecasts using parameter updates every hour (main case) or every 24 hours (case 1).}
    \label{fig:mse_vs_avg_variance_other_regions}
\end{figure}

\end{document}